\definecolor{Darkblue}{rgb}{0,0,0.4}
\definecolor{Brown}{cmyk}{0,0.81,1.,0.60}
\definecolor{Purple}{cmyk}{0.45,0.86,0,0}
\newcommand{\lref}[2][]{\hyperref[#2]{#1~\ref*{#2}}}
\newtheorem{theorem}{Theorem}
\newtheorem{proposition}[theorem]{Proposition}
\newtheorem{lemma}[theorem]{Lemma}
\newtheorem{claim*}[theorem]{Claim}
\newtheorem{observation}[theorem]{Observation}
\newcommand{\RR}{\ensuremath{\mathbb{R}}}
\newcommand{\e}{{\mathrm e}}
\renewcommand{\Pr}[1]{\mbox{\rm\bf Pr}\left[#1\right]}
\newcommand{\Ex}[1]{\mbox{\rm\bf E}\left[#1\right]}
\newcommand{\OPT}{\mathrm{OPT}}
\DeclareMathOperator*{\ALG}{ALG}
\DeclareMathOperator*{\median}{median}
\DeclareMathOperator*{\arrival}{arrival}
\DeclareMathOperator*{\departure}{departure}
\newcommand{\growingmid}{\mathrel{}\middle|\mathrel{}}
\title{Online Independent Set Beyond the Worst-Case:\\
Secretaries, Prophets, and Periods}
\author{Oliver G\"obel\thanks{Dept.\ of Computer Science, RWTH Aachen University, Germany. {\tt \{goebel,voecking\}@cs.rwth-aachen.de}. Supported by DFG Research Training Group AlgoSyn at RWTH Aachen University.}
\and Martin Hoefer\thanks{Max-Planck-Institut f\"ur Informatik and Saarland University, Saarbr\"ucken, Germany. {\tt mhoefer@mpi-inf.mpg.de}. Supported by DFG Cluster of Excellence M2CI at Saarland University and in part by DFG grant Ho 3831/3-1.}
\and Thomas Kesselheim\thanks{Dept.\ of Computer Science, Cornell University University, Ithaca, NY, USA. {\tt kesselheim@cs.cornell.edu}. Supported by a fellowship within the Postdoc-Programme of the German Academic Exchange Service (DAAD) and by DFG through UMIC Research Center at RWTH Aachen University.}
\and Thomas Schleiden$^*$
\and Berthold V\"ocking$^*$
}
\begin{document}

\maketitle

\begin{abstract}
We investigate online algorithms for maximum (weight) independent set on graph classes with bounded inductive independence number like, e.g., interval and disk graphs with applications to, e.g., task scheduling and spectrum allocation. In the online setting, it is assumed that nodes of an unknown graph arrive one by one over time. An online algorithm has to decide whether an arriving node should be included into the independent set.  Unfortunately, this natural and practically relevant online problem cannot be studied in a meaningful way within a classical competitive analysis as the competitive ratio on worst-case input sequences is lower bounded by $\Omega(n)$. This devastating lower bound holds even for randomized algorithms on unweighted interval graphs and, hence, for one of the most restricted graph class under consideration.

As a worst-case analysis is pointless, we study online independent set in a stochastic analysis. Instead of focussing on a particular stochastic input model, we present a generic sampling approach that enables us to devise online algorithms achieving performance guarantees for a variety of input models. In particular, our analysis covers stochastic input models like the secretary model, in which an adversarial graph is presented in random order, and the prophet-inequality model, in which a randomly generated graph is presented in adversarial order. Our sampling approach bridges thus between stochastic input models of quite different nature. In addition, we show that our approach can be applied to a practically motivated admission control setting in which the algorithm uses the input from a preceding period as sample graph for the current period.

Our sampling approach yields an online algorithm for maximum independent set on interval and disk graphs with competitive ratio $O(1)$ with respect to all of the mentioned stochastic input models.
More generally, for graph classes with inductive independence number~$\rho$, the competitive ratio is $O(\rho^2)$. The approach can be extended towards maximum-weight independent set by losing only a factor of $O(\log n)$ in the competitive ratio with $n$ denoting the (expected) number of nodes.
This upper bound is complemented by a lower bound of $\Omega(\log n/\log^2 \log n)$ showing that our sampling approach achieves nearly the optimal competitive ratio in all of the considered models. Furthermore, we generalize our analysis to address several practically motivated extensions of the independent set problem, e.g., arrival and departure times of nodes or edge-weighted graphs capturing SINR-type interference conflicts in wireless networks. 
\end{abstract}

\thispagestyle{empty}
\setcounter{page}0
\pagebreak[4]

% !TeX root = onlinedisks.tex

\section{Introduction}

Various scheduling and resource allocation problems can be formulated in terms of independent set problems for different graph classes. In such a formulation, the nodes of the graph represent  tasks or requests that are connected by an edge if they are in conflict with each other. An independent set corresponds to a subset of tasks or requests that do not have a conflict and, hence, can be executed or served simultaneously. In the Max-IS problem, the objective is to find an independent set of maximum cardinality. In the Max-Weight-IS problem, the nodes come with weights and the objective is to find an independent set of maximum weight. 

Previous work on independent set problems is mostly concerned with offline optimization where the complete input is known in advance. In many application contexts, however, requests arrive over time and an online algorithm has to make irrevocable decisions for arriving requests without knowing requests that arrive in the future. In particular, in admission control, the online algorithm has to decide which of the arriving requests shall be served and which shall be declined. This corresponds to online variants of independent set where nodes arrive over time. Each node comes with information about its incident edges to previously arrived nodes. The online algorithm has to decide which of the nodes should be included into the independent set and which should be rejected.

Unfortunately, even for rather restrictive (but in the context of scheduling and admission control highly relevant) graph classes like interval and disk graphs, the classical worst-case competitive analysis of online algorithms for independent set problems does not make much sense  -- there is a lower bound of $\Omega(n)$ on the worst-case competitive ratio for Max-IS on $n$-node interval graphs. To see this, consider the following input sequence consisting of $n/2$ pairs of intervals (disks): The sequence begins with a pair of disjoint intervals. Then the next pair of disjoint intervals is inserted into one of these intervals. The decision into which of the intervals from the previous pair the new pair is inserted is done by flipping a fair coin. This process is continued recursively until $n/2$ pairs of intervals have been generated. Obviously, the generated sequence  contains an independent set of $n/2+1$ intervals, regardless of the outcome of the coin flips. A simple recurrence shows, however, that the expected number of independent intervals that can be selected by an online algorithm is at most 2. Thus, the competitive ratio for Max-IS is lower-bounded $\Omega(n)$. By Yao's principle, this negative result extends to randomized online algorithms as well.

The alternative to a worst-case analysis is a stochastic analysis. It is already challenging, however, to choose the {\em right} stochastic input model which, on the one hand, allows for devising online algorithms with meaningful performance guarantees and, on the other hand, is reasonable from a practical point of view.  We approach this challenge by studying not only one but a variety of stochastic input models. In particular, we study two stochastic input models that are inspired by theoretical studies on the secretary problem with arrivals in random order~\cite{Dynkin1963} and a variant of this problem with arrivals in adversarial order combined with stochastic predictions based on so-called prophet-inequalities~\cite{Krengel1977,Krengel1978}. Our study is complemented by a third model that is motivated by a practical admission control problem. In each of these models an input sequence is generated by a different mix of stochastic and adversarial processes.

In our analysis, we focus on graph classes of bounded inductive independence. The inductive independence number $\rho$ of a graph is the smallest number of which there is an order $\prec$ such that for any independent set $S \subseteq V$ and any $v \in V$, we have $\lvert \{ u \in S \mid u \succ v \text{ and } \{u,v\} \in E\} \rvert \leq \rho$. The inductive independence number is a useful concept and bounded in many prominent graph classes. The case of $\rho = 1$ is equivalent to the existence of a so-called perfect elimination ordering, which is used to define the class of chordal graphs \cite{Ye2009}. Classes with larger, but constant $\rho$, include line graphs and planar graphs with $\rho = 2$ and $\rho = 3$, respectively. Line graphs are a special case of claw-free graphs. In these as well as in bounded-treewidth graphs, $\rho$ is bounded in terms of the parameter by which the respective graph class is defined. In case of planar graphs the bound on the inductive independence number is due to the fact that they maintain a bounded average degree on every induced subgraph. In intersection graphs of translates of geometric objects, $\rho$ is constant as well. It depends on the dimension of the geometric object though. In this work, we casually refer to interval and disk graphs, which have bounded inductive independence number $1$ and $5$. The reason for our consideration is, that interval graphs are an established model for scheduling problems in which nodes correspond to tasks with start and finishing times. Disk graphs generalize interval graphs from one to two dimensions. They are frequently used to describe spectrum allocation problems in wireless networks. In addition, we also study the independent set problem with respect to more advanced interference models for wireless networks, whose respective conflict graphs often have bounded inductive independence \cite{Hoefer2011}. This even holds for interference constraints based on SINR (signal to interference plus noise ratio) constraints to which a generalized notion of inductive independence can be applied. 

\subsection{Description of the models}

We study the following stochastic input models:
\begin{itemize}
\item {\em Secretary model:} The adversary defines a node-weighted graph $G=(V,E,w)$ with $n$ nodes. (For simplicity, we assume integer weights. In case of Max-IS, all nodes have weight 1.) A~priori, the algorithm knows $n$ but neither $G$ nor the weights. The nodes of $G$ are presented in random order to the online algorithm where each permutation of the nodes is assumed to occur equally likely.
\item {\em Prophet-Inequality model:} The adversary defines a graph $G=(V,E)$, and for each node a separate probability distribution on its weight. A priori, the algorithm knows $G$ and the probability distributions but not their outcomes. The nodes of $G$ are presented in adversarial order to the online algorithm where the actual weight is revealed to the algorithm only when the node arrives.
\item {\em Period model:} Let $G = (V,E,w)$ be an arbitrary node-weighted graph. Time is partitioned into periods. For each period $t = 1,2,\ldots$ and each node $i \in V$, the adversary defines a probability $p_i^t \in [0,1]$ such that, for $t \ge 2$, $p_i^t \in [p_i^{t-1}/c, p_i^{t-1} \cdot c]$, where $c \ge 1$ is assumed to be constant. Let $X_i^t$ denote independent binary variables with $\Pr{X_i^t=1}=p_i^t$ and $\Pr{X_i^t=0}=1-p_i^t$. Let $V_t = \{i \in V \mid X_i^t = 1\}$. In every period  $t \ge 2$, the nodes in $V_t$ are presented in adversarial order to the online algorithm which aims at finding an independent set among the nodes in $V_t$. In this model, the probabilities $p_i^t$ and the order in which the nodes in $V_t$ arrive are not assumed to be known by the algorithm a priori.
\end{itemize}
The first of these stochastic input models is inspired by the classical secretary problem~\cite{Dynkin1963} in which the task is to pick the best among $n$ secretaries which are presented in random order. The second model is in spirit of problems with prophet-inequality~\cite{Krengel1977,Krengel1978}, where candidates come in adversarial order but each candidate has a publicly known distribution of his weight. In the base case of either setting, one has to select one of $n$ entities that are presented online and have to be accepted or rejected immediately at arrival. Each entity comes with a weight, that is revealed at the time of arrival. The objective is to maximize the weight of the one entity that is accepted. In the secretary problem, the weights are determined by an adversary, but the adversary is not able to fix the order. Instead, entities arrive in a random permutation. In the prophet-inequality model, weights are drawn at random from publicly known distributions, but the adversary can fix 
distributions 
and arrival order.

The third model is motivated by admission control protocols that have to decide about requests using stochastic knowledge from previous ``corresponding'' periods. For example, to make decisions in the time period on this week Friday from 9am to 10am, an admission control algorithm might want to learn from events in the same time window(s) of previous Friday(s). The idea behind this model is that the graph $G$ describes a potentially very large universe of possible requests. This graph might represent disks of various sizes at different positions which are requested with certain probabilities.
An adversary fixes a distribution which generates requests by picking a set of nodes from $G$ at random. 
Distributions might change over time but the deviation from period to period is bounded as specified by the global constant $c \ge 1$. The order in which the requests are presented in the period model is adversarial. That is, it is assumed implicitly that the order in which requests arrive within a period is unpredictable.

We evaluate online algorithms in terms of the {\em competitive ratio} which is defined as $\Ex{\rm OPT}/\Ex{\rm ALG}$ with OPT denoting the maximum weight of an independent set for the given instance and ALG denoting the weight of the independent set selected by the online algorithm. 
The expectation is with respect to the stochastic input model and random coin flips of the algorithm.
In case of the secretary model, the weight of OPT is fixed so that the competitive ratio simplifies to ${\rm OPT}/\Ex{\rm ALG}$. In case of the period model, we study the competitive ratio with respect to any fixed period $t \ge 2$.

\subsection{Our Contribution}
The three stochastic input models described above are conceptually quite different.
In particular, the secretary model assumes an adversarial graph whose nodes are presented in random order, whereas the other two models assume that the graph is randomly generated but the nodes are presented in adversarial order. Furthermore, it is assumed that the online algorithm has access to the distribution in the prophet model, whereas, in the period model, it can only observe samples obtained from similar distributions.

In order to cope with differences between the models, we present a unifying {\em graph sampling model} that bridges between these models. In this model, the online algorithm is initially equipped with a sample graph which is generated by a distribution that is stochastically similar to the distribution of the input graph.
In Section~\ref{sec:model}, we introduce this model formally and show that it can be simulated by each of the other input models. By this approach, we are able to devise online algorithms achieving -- up to small constant factors -- the same competitive ratio for all of the models. 

On the basis of the graph sampling model, we are able to present an online algorithm for Max-IS with competitive ratio $O(\rho^2)$ for graphs with inductive independence number $\rho$. The algorithm and its analysis are presented in Section~\ref{sec:unweighted}. In particular, we achieve competitive ratio $O(1)$ for independent set on interval and disk graphs in all of the considered input models. Our analytic approach shows that one does not need to make specific stochastic assumptions in order to break through the $\Omega(n)$ worst-case lower bound. Indeed, the same kind of online algorithm performs well under a variety of stochastic assumptions.

In Section 4, we present upper and lower bounds for weighted independent set.
At first, we show how the algorithm for Max-IS can be adapted to Max-Weight-IS. We obtain a competitive ratio of $O \left( \rho^2 \log n \right)$ in the graph sampling model and, hence, all of the models, where $n$ denotes an upper bound on the (expected) number of nodes that are presented to the algorithm (in the considered period). We show that this bound is almost best possible for interval and disk graphs. In particular, we prove a lower bound on the competitive ratio for weighted independent set on interval graphs of order $\Omega\left( \log n /\log^2 \log n \right)$ in the secretary and the prophet-inequality model. The same bound applies to the period and the unifying model, too, and it holds even for randomized algorithms.

Motivated by admission control and scheduling applications, we additionally study a problem variant in which nodes have different arrival and departure times. The adversary is allowed to fix in advance the conflict graph and for each node a time interval in which the node is present. Only the nodes being active at the same time have to be independent. Technical details of this online problem are explained in Section~\ref{sec:arrivals}.
We show how to solve variants of Max-IS and Max-Weight-IS with arrival and departure times by using the algorithms from Section 3 and 4 as subroutines.
In particular, by introducing arrival and departure times, we lose only a factor 
\( O\left( \log n \right) \) in the competitive ratio.

Finally, we show how to transfer our results to edge-weighted conflict graphs. This way, more sophisticated wireless interference models can be analyzed, for example, the commonly studied ones based on SINR constraints. We present an algorithm whose competitive ratio is \( O\left( \rho^2 \log^2 n \right) \) for solving online independent set in edge-weighted conflict graphs. A more detailed description of the algorithm and the employed model is given in Section~6.

% !TeX root = onlinedisks.tex

\subsection{Related Work}
\label{sec:related}

The (offline) maximum independent set problem on interval graphs was essentially solved already in the 1970s: Frank~\cite{Frank1975} presented a linear-time algorithm that solves the problem exactly on chordal graphs. For disk graphs, in contrast, the problem is NP-hard but admits a PTAS~\cite{Erlebach2005}. A number of studies have considered graphs of bounded inductive independence number \cite{Akcoglu2000,Ye2009}, which mainly see this property as a generalization of chordal graphs. Graphs of inductive independence number $\rho$ are also $\rho \chi(G)$-inductive. Irani~\cite{Irani1994} shows how to color $d$-inductive graphs online with $O(d \log n)$ colors.

The use of disk graphs was often motivated by interference in wireless networks. In a more realistic model for interference using the signal-to-interference-plus-noise ratio (SINR), a number of approximation algorithms \cite{Andrews2009,Goussevskaia2009,Halldorsson2009,Halldorsson2011,Kesselheim2011} for different variants of maximizing the number of successful simultaneous transmissions have been presented. Interestingly, Hoefer et al.~\cite{Hoefer2011} showed that any of these problems can also be described as a maximum independent set problem in an edge-weighted conflict graph. Moreover, the inductive independence number of these graphs turns out to be bounded by a constant or $O(\log n)$ (see also \cite{Halldorsson2013}). Besides, also the graphs arising from a number of further simple interference models have a constant inductive independence number as well. Therefore, this property is a very useful abstraction when dealing with wireless interference in an algorithmic setting.

In any of these models, online worst-case optimization can only achieve trivial results when not assuming further restrictions. For interval graphs, one commonly assumes the value density (in the unweighted case: ratio of the maximum to the minimum interval length) to be bounded by some $k$. For this case, Koren and Shasha~\cite{Koren1995} present a $(1 + \sqrt{k})^2$-competitive algorithm, which is also shown to be optimal. For SINR models, Fangh\"anel et al.~\cite{OnlineSPAA} use similar geometric parameters and achieve tight competitive ratios. Unfortunately, in comparison to offline optimization the achievable guarantees are quite poor.

One of the earliest results in non-worst-case online optimization is by Dynkin~\cite{Dynkin1963}, who analyzed the basic secretary problem and presented a simple (optimal) selection rule, that accepts the highest ranked entity with probability at least $\frac{1}{\e} - o(1)$. A similar constant-factor approximation is possible in the prophet-inequality setting as shown by Krengel and Sucheston~\cite{Krengel1977,Krengel1978}. Both settings have strong connections to online auctions, where bidders arrive one at a time and have to be served. There is a large body of work on how incentive compatibility can be achieved \cite{Hajiaghayi2007,Chawla2010,Alaei2011}.

Only very recently, these models have been analyzed with respect to combinatorial optimization problems that are non-trivial in the offline setting. The matroid independent set problem was considered in secretary~\cite{Babaioff2007} and prophet-inequality models~\cite{Kleinberg2012}. Different variants of matching and set packing problems were studied in the secretary model \cite{Korula2009,KrystaV12, Kesselheim2013ESA} and the prophet-inequality model~\cite{Alaei2012}. Our algorithm for unweighted independent set is inspired by~\cite{Korula2009}, which uses a greedy algorithm to guide the online-computation of a weighted matching. 

More general packing problems have been studied in the secretary model as well. In its simplest form, this is the knapsack problem, which was considered by Babaioff et al.~\cite{Babaioff2007Knapsack}. Allowing multiple constraints, the problem becomes solving linear packing problems online. It is shown in \cite{Agrawal2009, Devanur2009, Feldman2010, Devanur2011, Molinaro2012} that this kind of online problems can be solved almost optimally, provided that the given capacities are large enough. However, even under this restricting assumption, the described algorithms are not applicable for independent set problems. 

% !TeX root = onlinedisks.tex

\section{Graph Sampling Model}
\label{sec:sampling}

In this section, we present a technically motivated but nevertheless rather intuitive stochastic input model that serves as bridge between the three input models from the introduction. In the {\em graph sampling model}, the online algorithm is initially equipped with a sample graph that is stochastically similar to an input graph which is presented in online fashion. In the following, we first describe the properties of this model formally and then we explain how it can be simulated by each of the other three models so that competitive ratios achieved for the graph sampling model hold for those models, too.

Let $G = (V, E)$ be an arbitrary graph from the considered class.
From this graph, one derives two induced subgraphs, the {\em input graph $G[V^I]$} with weights $w^I$ and the {\em sample graph} $G[V^S]$ with weights $w^S$, where $V^I,V^S \subseteq V$. The two sets $V^I$ and $V^S$ are generated implicitly by drawing non-negative weights $w^I(v)$ and $w^S(v)$ at random, for each node $v \in V$. For simplicity in notation, we assume that node weights are integral.
We set $V^I = \{v \in V| w^I(v) > 0\}$ and $V^S = \{v \in V| w^S(v) > 0\}$. The weights $w^I$ and $w^S$ need not to be drawn according to exactly identical distributions, but they have to satisfy the following assumptions.\footnote{The competitive ratios that we prove do not depend on the size of the graph $G$, but only on the expected size of the graph $G[V^I]$ presented to the online algorithm. For this reason, the model can be extended to infinite graphs representing, e.g.,  all possible disks in Euclidean space. In such an extension, probability distributions might be continuous rather than discrete. Only for notational simplicity, we focus on finite graphs, integer weights, and discrete  probability distributions.}
\begin{itemize}
\item {\em Stochastic similarity:} for every node $v \in V$ and every integer $b > 0$, $\Pr{w^I(v) = b}\leq c\Pr{w^S(v) = b}$ and $\Pr{w^S(v) = b}\leq c\Pr{w^I(v) = b}$
with $c \geq 1$ denoting a fixed, constant term.
\item {\em Stochastic independence:} for every node $v \in V$, the weights $w^I(v)$ and $w^S(v)$ do not depend on the weights $w^I$ and $w^S$ of other nodes
\end{itemize}
Let us explicitly point out that the weights $w^I$ and $w^S$ for the same node need not to be independent, that is, for any $v \in V$, $w^I(v)$ and $w^S(v)$ might be correlated. These possible dependencies are crucial for the simulation of the graph sampling model by the secretary model.

Let us now describe how the input is presented to the online algorithm.
A priori, the algorithm is not supposed to know $G$, the weights $w^I$, $w^S$, or even the probability distributions for the weights. As initial input, it receives the sample graph $G[V^S]$ together with the weights $w^S$ for the nodes in $V^S$. Nodes in $V^I$ arrive one by one in adversarial order. When a node $v \in V^I$ arrives, the algorithm gets to know the weight $w^I(v)$ as well as the edges from $v$ to nodes in $V^S$ and to those nodes in $V^I$ that arrived before $v$. If $v$ is also contained in $V^S$, it is revealed that these are identical. Based on this information, the online algorithm has to irrevocably decide whether $v$ should be included into the independent set or rejected.

The \emph{competitive ratio} of an algorithm in the graph sampling model is defined as $\Ex{{\rm OPT}(w^I)} / \Ex{\rm ALG}$, where ${\rm OPT}(w^I)$ is the maximum weight of an independent set with respect to the weights $w^I$. The next proposition shows that an upper bound on the competitive ratio for the graph sampling model implies upper bounds on the competitive ratios for the other stochastic input models.

\begin{proposition}\label{proposition:sampling1}
If there is an $\alpha$-competitive algorithm for Max-IS (Max-Weight-IS) in the graph sampling model, then there are $O(\alpha)$-competitive algorithms for Max-IS (Max-Weight-IS) in the prophet-inequality model, the period model, and the secretary model.
\end{proposition}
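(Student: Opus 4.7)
The plan is to prove the proposition by giving, for each of the three models, a coupling that embeds the graph sampling model inside it, so that an $\alpha$-competitive algorithm for the sampling model can be run as a black box with only a constant-factor loss. In every case I will define $w^I$ and $w^S$ explicitly on the given instance, verify the two conditions (stochastic similarity and cross-node independence), and then compare $\Ex{\mathrm{OPT}(w^I)}$ with the appropriate benchmark.

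For the \emph{prophet-inequality model} I would build the sample up front by drawing $w^S(v)$ independently from the publicly known distribution of each node $v$, while $w^I(v)$ is the actual realization revealed upon arrival. Stochastic similarity then holds with constant $1$ (the two copies are identically distributed), and cross-node independence is immediate from the two independent draws combined with the standing independence of the prophet distributions. Since $\Ex{\mathrm{OPT}(w^I)}$ coincides with the expected prophet optimum, the ratio is preserved exactly. The \emph{period model} is analogous: I would use the realization of period $t{-}1$ as the sample, so $w^S(v)=X_v^{t-1}$ and $w^I(v)=X_v^t$. Stochastic similarity with constant $c$ is exactly the standing assumption $p_v^t \in [p_v^{t-1}/c,\,p_v^{t-1}\cdot c]$, and cross-node independence is inherited from the independence of the $X_v^\cdot$'s across $v$.

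The \emph{secretary model} is the genuinely non-trivial case, since the weights $w(v)$ are adversarially fixed and the only randomness is the uniform arrival order. My plan is to draw an auxiliary $T \sim \mathrm{Binomial}(n,1/2)$ at the outset, declare the first $T$ arrivals to be the sample set $V^S$ (merely observed, never selected), and treat the remaining $n-T$ arrivals as the input set $V^I$, setting $w^S(v)=w(v)\,\mathbf{1}[v\in V^S]$ and $w^I(v)=w(v)\,\mathbf{1}[v\in V^I]$. The key calculation is that under a uniformly random permutation this joint distribution of $(V^S,V^I)$ coincides exactly with independent fair coin flips over the nodes: for any $S \subseteq V$,
\[
\Pr{V^S=S}=\Pr{T=|S|}\cdot \binom{n}{|S|}^{-1}=2^{-n}.
\]
This yields both similarity constant $c=1$ and the required cross-node independence. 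For the objective comparison, a fixed optimal independent set $S^\star$ remains independent when intersected with $V^I$, so $\Ex{\mathrm{OPT}(w^I)} \geq \Ex{w(S^\star\cap V^I)} = \mathrm{OPT}/2$, which together with the $\alpha$-competitiveness of the black-box algorithm gives a $2\alpha$-competitive algorithm in the secretary model.

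The main obstacle I anticipate is precisely this distributional identity in the secretary reduction: one must justify that the size-$T$ prefix of a uniformly random permutation, with $T$ itself binomial, behaves across nodes like independent fair coin flips, whereas a \emph{fixed}-size prefix would not. The other two reductions are essentially bookkeeping because there the sample and the input are already generated by independent stochastic mechanisms. Once the binomial-split identity is established, composing the three reductions with the assumed $\alpha$-competitive algorithm yields competitive ratios of $O(\alpha)$ in all three models.
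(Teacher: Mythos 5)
Your proposal is correct and follows essentially the same route as the paper: identical reductions for the prophet and period models, and the same binomial-prefix device for the secretary model (the paper likewise draws $k\sim\mathrm{Binomial}(n,1/2)$ and observes that the resulting joint law of $(V^S,V^I)$ is product-form over nodes). The only cosmetic difference is in the secretary objective comparison, where you project the fixed offline optimum $S^\star$ onto $V^I$ to get $\Ex{\mathrm{OPT}(w^I)}\ge \mathrm{OPT}/2$, while the paper invokes symmetry $\Ex{\mathrm{OPT}(w^I)}=\Ex{\mathrm{OPT}(w^S)}$ together with subadditivity $\mathrm{OPT}(w)\le \mathrm{OPT}(w^I)+\mathrm{OPT}(w^S)$; these are interchangeable.
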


\begin{proof}
We describe how the input and the sample graph and the corresponding weights can be derived in each of the three input models. Let $\alpha = \alpha(c)$ denote the competitive ratio in the graph sampling model with $c$ denoting the constant term from the similarity condition.

At first, we consider the prophet-inequality model. The graph $G=(V,E)$ of the sampling model is identified with the corresponding graph of the prophet-inequality model. We set $V = V^S = V^I$ by assuming, w.l.o.g., that all weights are positive. In particular, the weights $w^I$ in the graph sampling model correspond to the original weights $w$ from the prophet-inequality model. The online algorithm initially generates the weights $w^S$ by simulating the probability distributions given by the prophet-inequality model. Since $w^I$ and $w^S$ are generated by identical distributions, the similarity condition of the sampling model holds for $c=1$. Thus, the competitive ratio in the prophet-inequality model is $\alpha(1)$.

At second, we consider the period model. One obtains a sample graph for period $t \ge 2$ by observing period $t-1$. Let $w(v)$ denote the adversarial weights from the period model. In the graph sampling model, we set $w^S(v) = w(v)$ if $v \in V_{t-1}$ and $w^S(v) = 0$, otherwise. Analogously, we set $w^I(v) = w(v)$ if $v \in V_t$ and $w^I(v) = 0$, otherwise. This way, the similarity condition is satisfied so that the competitive ratio in the period model is $\alpha(c)$, where $c$ corresponds to the parameter in the definition of the probabilities in this model.

Finally, we describe the simulation for the secretary model. We draw a random number $k$ from the Binomial distribution $B(n,\frac12)$ with $n = |V|$ and define $V^S$ to contain the first $k$ nodes of the input and  $V^I$ to contain the remaining $n - k$ nodes. Technically, this is achieved by setting $w^I(v) = 0$, $w^S(v)=w(v)$, for the first $k$ nodes, and $w^I(v) = w(v)$, $w^S(v)=0$, for the remaining nodes, where $w(v)$ denotes the adversarial weights from the secretary model. As nodes arrive in random order and $k$ is determined by the binomial distribution the above definition of the weights is stochastically equivalent to choosing weight tuples $(w^I(v), w^S(v))$ independently, uniformly at random from $\{(0,w(v)),(w(v),0)\}$, for all nodes $v \in V$.  Thus, stochastic independence and stochastic similarity (with $c=1$) are satisfied. 
The online algorithm is $\alpha(1)$-competitive with respect to $G[V^I]$, that is, $\Ex{\rm ALG} \ge  \Ex{{\rm OPT}(w^I)}/\alpha(1)$.
Observe, however, that $G[V^I]$ contains only about half of the nodes of $G$. The other half is used for building the sample graph $G[V^S]$.
By symmetry, $\Ex{{\rm OPT}(w^I)} = \Ex{{\rm OPT}(w^S)}$, which implies ${\rm OPT}(w) = \Ex{{\rm OPT}(w^I + w^S)} \le \Ex{{\rm OPT}(w^I)} + \Ex{{\rm OPT}(w^S)} = 2 \Ex{{\rm OPT}(w^I)}$. Consequently,
$\Ex{\rm ALG} \ge \Ex{{\rm OPT}(w)}/2\alpha(1)$ so that the competitive ratio for the secretary model is upper-bounded by~$2\alpha(1)$.
\end{proof}

Because of Proposition~\ref{proposition:sampling1}, we can focus on the graph sampling model when proving upper bounds on the competitive ratio. The following lemma shows that indeed it is sufficient to compare the independent set computed by the algorithm to the maximum-weight independent set with respect to $w^S$ instead of $w^I$. That is, for the purpose of upper bounding the competitive ratio within constant factors, it suffices to upper-bound $\Ex{{\rm OPT}(w^S)}/\Ex{\rm ALG}$ instead of  $\Ex{{\rm OPT}(w^I)}/\Ex{\rm ALG}$.

\begin{lemma}\label{lemma:sampling2}
$\Ex{{\rm OPT}(w^S)} \ge \frac1{c} \Ex{{\rm OPT}(w^I)}$.
\end{lemma}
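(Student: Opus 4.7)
My plan is to insert, between $w^I$ and $w^S$, an independently thinned auxiliary weight vector $\hat w$ defined by $\hat w(v) := X(v) \cdot w^I(v)$, where the $X(v)$ are i.i.d.\ Bernoulli$(1/c)$ and independent of both $w^I$ and $w^S$. The lemma will then follow from the two-step chain $\Ex{{\rm OPT}(w^S)} \ge \Ex{{\rm OPT}(\hat w)} \ge \tfrac{1}{c}\Ex{{\rm OPT}(w^I)}$.

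For the right inequality, I would let $S^\star(w^I)$ be any max-weight independent set under $w^I$. Since $S^\star(w^I)$ is itself an independent set, ${\rm OPT}(\hat w) \ge \sum_{v \in S^\star(w^I)} \hat w(v) = \sum_{v \in S^\star(w^I)} X(v)\,w^I(v)$. Because $X$ is independent of $w^I$ (hence of $S^\star(w^I)$) and $\Ex{X(v)} = 1/c$, taking expectations and applying linearity yields exactly $\tfrac{1}{c}\Ex{{\rm OPT}(w^I)}$. For the left inequality, I would establish coordinate-wise stochastic dominance of $w^S$ over $\hat w$: for each $v$ and each integer $b > 0$, the similarity condition gives $\Pr{\hat w(v)=b} = \tfrac{1}{c}\Pr{w^I(v)=b} \le \Pr{w^S(v)=b}$, whence $\Pr{w^S(v) \ge b} \ge \Pr{\hat w(v) \ge b}$ for every $b$ (using that weights are nonnegative integers). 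Since both $\hat w$ and $w^S$ have mutually independent coordinates (by the stochastic-independence assumption), this per-coordinate dominance upgrades via a standard quantile coupling to a joint coupling in which $w^S(v) \ge \hat w(v)$ for all $v$ almost surely; then monotonicity of ${\rm OPT}(\cdot)$ in the weight vector gives ${\rm OPT}(w^S) \ge {\rm OPT}(\hat w)$ pointwise, and hence in expectation.

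The conceptual obstacle that forces this indirection is precisely the feature the model emphasizes: $w^I(v)$ and $w^S(v)$ for the same node may be arbitrarily correlated, so there is no direct coordinate-wise coupling of $w^I$ to $w^S$. The Bernoulli thinning $\hat w$ uses only $w^I$ and fresh independent randomness, so (a)~in the right-hand step we can freely pull $\Ex{X(v)}$ out, and (b)~in the left-hand step its marginal at each node is easy to compare to $w^S(v)$ via the similarity bound, so the within-node correlations between $w^I$ and $w^S$ never have to be addressed.
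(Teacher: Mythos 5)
Your proof is correct and is essentially the paper's argument: both rest on the same Bernoulli$(1/c)$ thinning of $w^I$ and the same observation that only the per-node marginals of $w^I$ and $w^S$ together with cross-node independence matter, so the unknown within-node correlation between $w^I(v)$ and $w^S(v)$ can be bypassed. The paper packages this by first reducing w.l.o.g.\ to $\Pr{w^S(v)=b}=\tfrac1c\Pr{w^I(v)=b}$ for $b>0$ and then recoupling $w^S$ to equal the thinned $w^I$, whereas you introduce the thinned vector $\hat w$ explicitly and close the gap to $w^S$ via coordinate-wise stochastic dominance and a monotone coupling; this is a slightly more careful rendering of the same two steps rather than a different route.
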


\begin{proof}
Stochastic similarity gives $\Pr{w^S(v) = b} \ge \frac{1}{c} \Pr{w^I(v) = b}$, for any $v \in V$ and $b > 0$. W.l.o.g., we can assume that $\Pr{w^S(v) = b} = \frac{1}{c} \Pr{w^I(v) = b}$, for any $b > 0$, as this only decreases $\Ex{{\rm OPT}(w^S)}$. 

Observe that neither $\Ex{{\rm OPT}(w^S)}$ nor $\Ex{{\rm OPT}(w^I)}$ are affected by the correlation between $w^I$ and~$w^S$. This allows us to define an arbitrary coupling between $w^I$ and $w^S$, that is, we rearrange the random experiments for choosing $w^I$ and $w^S$ in a helpful way that changes the correlation between $w^I$ and $w^S$ but does not affect the individual probability distributions for $w^I$ and $w^S$. This coupling is defined as follows: For each node $v \in V$, we set $w^S(v) = w^I(v)$ with probability $\frac1c$ and  $w^S(v) = 0$, otherwise.

Now, following the principle of deferred decisions, we assume that the weights $w^I$ are fixed arbitrarily. Let $U \subseteq V^I$ denote a maximum-weight independent set in $G[V^I]$. Then $U^S = \{v \in U| w^S(v) \ge 0\}$ is an independent set in $G[V^S]$ and $\Ex{w^S(U^S)} = \frac{1}{c} w^I(U) = \frac{1}{c} {\rm OPT}(w^I)$, which implies the lemma.
\end{proof}
\label{sec:model}
% !TeX root = onlinedisks.tex

\section{Unweighted Independent Set}

We study Max-IS on graphs with bounded inductive independence number in the graph sampling model. That is, we consider the input model from Section~\ref{sec:sampling} restricted to $\{0,1\}$-weights and assume that the underlying graph $G=(V,E)$ has bounded inductive independence number $\rho \ge 1$. Because of the restriction to $\{0,1\}$-weights, the graph sampling model simplifies as follows. One picks two subsets $V^I$ and $V^S$ from $V$ at random.
The induced graphs $G[V^I]$ and $G[V^S]$ are the input and the sample graph, respectively. For a node $v \in V$, the events $v \in V^I$ and $v \in V^S$ might be correlated. By the stochastic independence assumption, however, these events, do not depend on events for other nodes. The stochastic similarity property for $\{0,1\}$-weights gives
\begin{equation}
\frac{1}{c}\Pr{v\in V^S}\leq\Pr{v \in V^I}\leq c \, \Pr{v\in V^S}\enspace .
\label{eq:ccondition}
\end{equation}

Our online algorithm applies the greedy algorithm for independent set to the sample graph $G[V^S]$ and employs the output of this algorithm to guide the online computation on the input graph $G[V^I]$. In the offline setting, a greedy algorithm for independent set on graphs with bounded inductive independence number starts with $I=\emptyset$ and considers all nodes of $V$ iteratively according to \(\prec\). It adds a node to \(I\) when it is not in conflict with other nodes already in \(I\). This yields a \(\rho\)-approximation, because the algorithm sticks to the order \(\prec\): Selecting a node not in the optimal solution prevents at most \(\rho\) many neighbors from being selected to \(I\), cf., e.g., \cite{Akcoglu2000, Ye2009}.

In more details, Algorithm~\ref{alg-onlinedisks} computes two sets $M_1 \subseteq V^S$ and $M_2 \subseteq V^I$. $M_1$ is the output of the greedy algorithm applied to $G[V^S]$. $M_2$ is obtained by going through the nodes in $V^I$ in adversarial order and checking for each $v \in V^I$ whether it would have been taken by the greedy algorithm on $G[V^S \cup \{v\}]$. In our analysis, we will show that the expected value of $M_2$ is of the same order as the expected value of $M_1$ and, hence, an $O(\rho)$-approximation of $OPT(w^S)$. By Lemma~\ref{lemma:sampling2}, this implies that $M_2$ is an $O(\rho)$-approximation of $\Ex{OPT(w^I)}$. Unfortunately, however, $M_2$ is not an independent set. Feasibility is achieved by two further steps: We first obtain a set $M_3$ by randomly sparsifying $M_2$, which loses another factor of $O\left(\rho\right)$ in the competitive ratio. The remaining conflicts are resolved by only moving nodes to a set $M_4$ (the output of the algorithm) that are not adjacent to a node that was 
previously inserted into $M_4$. Finally, a stochastic analysis of the conflicts in $M_3$ shows that this final conflict resolution step loses only a constant factor in the competitive ratio.

\begin{algorithm}[t]
%\DontPrintSemicolon
\label{alg-onlinedisks}
\caption{Unweighted Online-Max-IS}
\KwIn{\(G[V^S]\)}
\(M_1, M_2, M_3, M_4 \leftarrow\emptyset\)\;
\ForAll{\(v\in V^S\) in order according to \(\prec\)} 
{\lIf{\(M_1\cup\{v\}\) is independent}{\(M_1\leftarrow M_1\cup\{v\}\)}}
\ForAll{\(v\in V^I\) in order of arrival} {\lIf {\(\nexists u\in M_1,u\prec v\) with \(\{u,v\}\in E\)} {\(M_2\leftarrow M_2\cup\{v\}\)}\;
\lIf{\(v\in M_2\)} {w/prob \( q := \frac{1}{2\rho c} \colon M_3\leftarrow M_3\cup\{v\}\)}\;
\lIf{\(v\in M_3\) and \(\nexists u\in M_4\) s.t. \(\{v,u\}\in E\)}{\(M_4\leftarrow M_4\cup\{v\}\)}}
return \(M_4\)\;
\end{algorithm}
%
%The set \(M_4\), output by Algorithm~\ref{alg-onlinedisks}, clearly is an independent set.
%
\begin{theorem}\label{theorem-onlinedisks}
  Algorithm~\ref{alg-onlinedisks} is $4 c^3 \rho^2$-competitive.
\end{theorem}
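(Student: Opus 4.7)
The goal is $\Ex{|M_4|} \geq \Ex{\OPT(w^I)}/(4c^3\rho^2)$, which I establish by chaining four inequalities, namely $\Ex{|M_1|} \geq \Ex{\OPT(w^I)}/(c\rho)$, $\Ex{|M_2|} \geq \Ex{|M_1|}/c$, $\Ex{|M_3|} = q\,\Ex{|M_2|}$, and $\Ex{|M_4|} \geq \Ex{|M_3|}/2$, where $q = 1/(2c\rho)$. Multiplying these factors produces the claimed ratio $4c^3\rho^2$.

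The first inequality is immediate: offline greedy is a $\rho$-approximation for graphs of inductive independence number $\rho$, so $|M_1| \geq \OPT(G[V^S])/\rho$ holds pointwise, and \lref{lemma:sampling2} (specialized to $\{0,1\}$-weights) supplies $\Ex{\OPT(G[V^S])} \geq \Ex{\OPT(w^I)}/c$. For the second, introduce the event $A_v$ that no predecessor-neighbor of $v$ (in the $\prec$-order) lies in $M_1$. Because $M_1$ is a deterministic function of the $V^S$-indicators at nodes strictly preceding $v$, stochastic independence makes $A_v$ independent of $(w^I(v), w^S(v))$. Since $v \in M_1 \Leftrightarrow v \in V^S \wedge A_v$ and $v \in M_2 \Leftrightarrow v \in V^I \wedge A_v$, stochastic similarity then gives $\Pr{v \in M_2} \geq \Pr{v \in M_1}/c$, and summing over $v$ yields the second inequality.

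The core of the proof is a bound on the expected number of conflicting pairs in $M_2$: I claim $\sum_{\{u,v\} \in E}\Pr{u \in M_2 \wedge v \in M_2} \leq c\rho\,\Ex{|M_2|}$. Orient each edge with $u \succ v$; then $v$ is a predecessor-neighbor of $u$, so the joint event $\{u, v \in M_2\}$ entails $A_u \wedge A_v$, and $A_u$ forces $v \notin M_1$, which under $A_v$ is equivalent to $v \notin V^S$. On $\{v \notin V^S\}$, greedy on $V^S$ coincides with greedy on $V^S \setminus \{v\}$, so $A_u$ can be replaced by a surrogate $\tilde A_u$ whose definition uses only $V^S$-indicators at nodes different from $u$ and $v$; the event $A_v$ already has this property because all predecessor-neighbors of $v$ are $\prec v$ and distinct from $u$. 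Stochastic independence then decomposes
\begin{equation*}
\Pr{u, v \in M_2} \;=\; \Pr{u \in V^I}\cdot \Pr{v \in V^I \wedge v \notin V^S}\cdot \Pr{\tilde A_u \wedge A_v},
\end{equation*}
and the analogous factorization of $\Pr{u \in M_1 \wedge v \in M_2}$ differs only in that $\Pr{u \in V^I}$ becomes $\Pr{u \in V^S}$. Stochastic similarity thus gives $\Pr{u, v \in M_2} \leq c\,\Pr{u \in M_1, v \in M_2}$; summing over successor-neighbors $u$ of each fixed $v$ and applying the pointwise inductive-independence bound $|M_1 \cap N(v) \cap \{u : u \succ v\}| \leq \rho$ completes the claim.

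For the final inequality, random sampling to $M_3$ multiplies the expected set size by $q$ and the expected number of conflict pairs by $q^2$. Each node excluded when passing from $M_3$ to $M_4$ is witnessed by an earlier-arriving neighbor in $M_4 \subseteq M_3$, and distinct excluded nodes correspond to distinct conflict pairs in $M_3$. Hence $\Ex{|M_4|} \geq q\,\Ex{|M_2|} - q^2 c\rho\,\Ex{|M_2|} = (q/2)\Ex{|M_2|}$ for our choice $q = 1/(2c\rho)$. The principal obstacle is the factorization in the conflict-pair bound: one has to notice that on $\{v \notin V^S\}$ the greedy run on $V^S$ equals the greedy run on $V^S \setminus \{v\}$, so that both order-events become functions of $V^S$-indicators away from $u$ and $v$; after this, stochastic independence and similarity plug in cleanly and extract the single factor of $c$ that, together with inductive independence, produces the desired $c\rho$-bound.
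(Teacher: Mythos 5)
Your proof is correct and follows essentially the same approach as the paper: greedy $\rho$-approximation for $M_1$, stochastic similarity plus the independence of the order-event $A_v$ from $v$'s own weights to compare $M_1$ with $M_2$, a bound of $c\rho$ on expected conflicts via the inductive independence of $M_1$, and the optimization $q=1/(2c\rho)$. The only stylistic difference is that you phrase the conflict bound as a pairwise probability inequality $\Pr{u,v\in M_2}\le c\Pr{u\in M_1,v\in M_2}$ via an explicit factorization with the surrogate event $\tilde A_u$, whereas the paper reaches the same bound through conditional expectations $\Ex{X_u^3\mid v\in M_3}\le qc\Ex{X_u^1\mid v\in M_3}$ (its Lemma~\ref{lemma-c-equations} and Lemma~\ref{lemma-c}); the underlying insight — that on the relevant events the greedy run on $V^S$ is blind to $v$'s membership, so $u$'s indicator can be exchanged between $V^I$ and $V^S$ at a cost of $c$ — is identical.
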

To prove the theorem, we will use random variables $X_v^i$ for $i\in\{1,\ldots, 4\}$ where \(X_v^i=1\) if and only if \(v\in M_i\). Let us first observe a fundamental relationship between the respective variables $X_v^1$ and $X_v^2$.

\begin{lemma}
\label{lemma-c-equations}
For any $v \in V$, we have
\begin{equation}\label{eq-c1}
 \frac{1}{c} \Ex{X_v^1} \leq \Ex{X_v^2} \leq c \Ex{X_v^1} 
\end{equation}
and
\begin{equation}\label{eq-c2}
 \frac{1}{c} \Ex{X_v^1 \growingmid u \in M_3} \leq \Ex{X_v^2 \growingmid u \in M_3} \leq c \Ex{X_v^1 \growingmid u \in M_3} 
\end{equation}
for \( u \prec v \). 
\end{lemma}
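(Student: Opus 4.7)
The plan is to exploit that $X_v^1$ and $X_v^2$ share a common ``feasibility'' factor. Define $Y_v$ to be the indicator of the event that no neighbor $u$ of $v$ with $u \prec v$ lies in $M_1$. Since the greedy procedure builds $M_1$ by processing $V^S$ in $\prec$-order, $Y_v$ is a deterministic function of $V^S \cap \{u : u \prec v\}$ and hence of $\{w^S(u) : u \prec v\}$ alone. With this definition, a direct inspection of Algorithm~\ref{alg-onlinedisks} gives $X_v^1 = \mathbf{1}[v \in V^S]\cdot Y_v$ and $X_v^2 = \mathbf{1}[v \in V^I]\cdot Y_v$: greedy adds $v$ to $M_1$ iff $v \in V^S$ and no earlier neighbor of $v$ already sits in $M_1$, while the test that controls admission into $M_2$ is literally the condition $Y_v = 1$, evaluated only on nodes with $v \in V^I$.

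From this decomposition,~\eqref{eq-c1} is immediate. The indicators $\mathbf{1}[v \in V^S]$ and $\mathbf{1}[v \in V^I]$ are functions of $w^S(v)$ and $w^I(v)$ respectively; by stochastic independence they are both independent of $\{w^S(u),w^I(u) : u \neq v\}$, and in particular of $Y_v$. Therefore $\Ex{X_v^1} = \Pr{v \in V^S}\,\Pr{Y_v = 1}$ and $\Ex{X_v^2} = \Pr{v \in V^I}\,\Pr{Y_v = 1}$, and the common factor $\Pr{Y_v = 1}$ cancels in the ratio $\Ex{X_v^2}/\Ex{X_v^1}$. Stochastic similarity in the form of~\eqref{eq:ccondition} bounds that ratio inside $[1/c,c]$, which is precisely~\eqref{eq-c1}.

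For~\eqref{eq-c2} I carry out the same calculation after checking that the conditioning event does not entangle $v$. Because $u \prec v$, membership of $u$ in $M_3$ is determined by the indicator $\mathbf{1}[u \in V^I]$ (a function of $w^I(u)$), the feasibility variable $Y_u$ (a function of $\{w^S(u') : u' \prec u\}$), and an independent coin flip at $u$; none of these data involve $w^I(v)$ or $w^S(v)$. Hence stochastic independence yields that $\mathbf{1}[v \in V^I]$ and $\mathbf{1}[v \in V^S]$ remain independent of the pair $\bigl(Y_v,\mathbf{1}[u \in M_3]\bigr)$, so $\Ex{X_v^2 \growingmid u \in M_3} = \Pr{v \in V^I}\cdot \Ex{Y_v \growingmid u \in M_3}$ and analogously $\Ex{X_v^1 \growingmid u \in M_3} = \Pr{v \in V^S}\cdot \Ex{Y_v \growingmid u \in M_3}$. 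The conditional factor $\Ex{Y_v \mid u \in M_3}$ appears on both sides and cancels, so~\eqref{eq:ccondition} again closes the argument.

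The only delicate point is the bookkeeping of dependence: one must verify carefully that $Y_v$ reads weights only at nodes strictly preceding $v$, and that the $u \prec v$ restriction keeps the conditioning event $\{u \in M_3\}$ free of both $w^I(v)$ and $w^S(v)$. Once these two independence facts are nailed down, each of~\eqref{eq-c1} and~\eqref{eq-c2} reduces to a single application of stochastic similarity.
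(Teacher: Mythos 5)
Your proof is correct and takes essentially the same route as the paper: your feasibility indicator $Y_v$ is precisely the paper's event $\mathcal{E}$, and the factorization $X_v^i = \mathbf{1}[v \in V^{S/I}]\cdot Y_v$ together with stochastic independence and similarity is exactly the paper's argument, just presented directly rather than via the intermediate ``deferred-decisions'' reformulation (Algorithm~\ref{alg:simulate}).
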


\begin{proof}
For any $u \in V$, the decision whether $u$ belongs to $M_1 \cup M_2$ or not does not depend on the 
outcome of the random experiment deciding whether a node \( v \succ u \) belongs to $V^S$ or $V^I$. For the analysis, this means that we can assume that the random trial determining the membership in $V^S$ or $V^I$ is actually ``postponed'' as described in Algorithm~\ref{alg:simulate}, which produces stochastically the same sets $M_1$ and $M_2$ as Algorithm~\ref{alg-onlinedisks}. 

\begin{algorithm}
\ForAll{\(v\in V\) in order according to \(\prec\)} 
{\If{\(M_1\cup\{v\}\) is independent}{perform a random experiment with the following outcomes:\\
with probability $\Pr{v \in V^S, v \not\in V^I}$ add $v$ to $M_1$\\
with probability $\Pr{v \not\in V^S, v \in V^I}$ add $v$ to $M_2$\\
with probability $\Pr{v \in V^S, v \in V^I}$ add $v$ to $M_1$ and to $M_2$\\
with probability $\Pr{v \not\in V^S, v \not\in V^I}$ do nothing
}}
\caption{An equivalent way to obtain sets $M_1$ and $M_2$}
\label{alg:simulate}
\end{algorithm}

Let us now consider the course of events that determine if a node $v \in V$ is added to $M_1$, $M_2$, both, or neither by Algorithm~\ref{alg:simulate}. The node can only make it into one of the sets $M_1$ or $M_2$ if there is no $u \prec v$ such that $u \in M_1$ and $\{ u, v \} \in E$. So, let us denote the latter event by $\mathcal{E}$. (Observe that the probability for $\mathcal{E}$ might depend on the condition $u \in M_3$ from Equation \ref{eq-c2}.)

For  for $i \in \{1, 2\}$, $\Ex{X_v^i} = \Pr{ \mathcal{E} } \Pr{X_v^i = 1 \growingmid \mathcal{E} }$. It holds $\Pr{X_v^2 = 1 \growingmid \mathcal{E} } = \Pr{v \in M_2 \growingmid \mathcal{E} } = \Pr{v \in V^I \growingmid \mathcal{E} }$. Now, by stochastic similarity, the latter term is bounded from above by $c \Pr{v \in V^S \growingmid \mathcal{E} } = c \Pr{v \in M_1 \growingmid \mathcal{E} } = c \Pr{X_v^1 = 1 \growingmid \mathcal{E} }$. Observe that the assumption of stochastic independence ensures that the probabilities for $v \in V^S$ and $v \in V^I$ are not influenced by the condition on $\mathcal{E}$. Hence, we have shown the right inequalities of both \eqref{eq-c1} and \eqref{eq-c2}. The left inequalities follow analogously.
\end{proof}

We are now ready to prove bounds on the sets computed by the algorithm. The set \(M_1\) is determined by applying the greedy algorithm for independent set to \(G[V^S]\) and, hence, this set is a \(\rho\)-approximation of \({\rm OPT}(w^S)\). Combining this with Lemma~\ref{lemma:sampling2} showing $\Ex{{\rm OPT}(w^S)} \ge \frac1c \Ex{{\rm OPT}(w^I)}$ gives
\begin{eqnarray}\label{eqn-m1}
\Ex{\left\lvert M_1\right\rvert}\geq \frac{\Ex{{\rm OPT}(w^I)}}{c \rho} \enspace .
\end{eqnarray}
By applying the left inequality from (\ref{eq-c1}), we obtain $\Ex{|M_2|} \ge \frac{1}{c}\Ex{|M_1|}$. Furthermore, sparsifying from $M_2$ to $M_3$ causes losing the factor $q=\frac{1}{2\rho c}$. The reason for choosing $q$ in this way will become clear at the end of the proof. It holds $\Ex{|M_3|} = q \Ex{|M_2|}$. Thus, we obtain
\begin{eqnarray}\label{eqn-m3}
\Ex{\left\lvert M_3\right\rvert} \geq \frac{q \, \Ex{{\rm OPT}(w^I)}}{c^2 \rho} \enspace .
\end{eqnarray}

It remains to analyze the final conflict resolution in Algorithm~\ref{alg-onlinedisks} where only nodes without any conflict are selected in the final output set. The consequence of this approach is that for each conflict which would appear in the offline setting, exactly the node arriving first in the online setting is chosen by our algorithm. For a detailed analysis, we define $C=\{ \{u,v\} \in E \mid u,v \in M_3\}$ as the set of conflicts in \(M_3\). Note that the size of \( C \) is an upper bound to the overall number of nodes that are lost in the conflict resolution.

\begin{lemma}\label{lemma-c}
 \(\Ex{\left\lvert C\right\rvert}\leq\Ex{\left\lvert M_3\right\rvert}q\rho c\).
\end{lemma}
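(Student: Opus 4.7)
The plan is to expand $|C|$ as a double sum over ordered edges, peel off the independent sparsification coin, apply Lemma~\ref{lemma-c-equations} to swap $M_2$ for $M_1$ under the conditioning ``$u \in M_3$'', and then exploit the fact that $M_1$ is an independent set together with the definition of the inductive independence number $\rho$.

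Concretely, I would start by writing
\[
\Ex{|C|} \;=\; \sum_{\{u,v\} \in E,\; u \prec v} \Pr{u \in M_3 \text{ and } v \in M_3}.
\]
For a fixed such pair, note that $v \in M_3$ implies $v \in M_2$, and the sparsification coin for $v$ (which gives $v \in M_3$ with probability $q$ conditioned on $v \in M_2$) is an independent fair-$q$ trial that is in particular independent of the event $u \in M_3$. Hence
\[
\Pr{u \in M_3 \text{ and } v \in M_3} \;=\; q \cdot \Pr{u \in M_3} \cdot \Pr{v \in M_2 \growingmid u \in M_3}.
\]
Applying the right inequality of \eqref{eq-c2} (which is valid because $u \prec v$), this is bounded by $q \, c \, \Pr{u \in M_3} \, \Pr{v \in M_1 \growingmid u \in M_3}$.

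Summing over all edges with $u \prec v$ and reorganizing by the lower endpoint $u$ gives
\[
\Ex{|C|} \;\le\; q c \sum_{u \in V} \Pr{u \in M_3} \cdot \Ex{\,\bigl|\{v \in M_1 \colon v \succ u,\ \{u,v\} \in E\}\bigr| \,\Big|\, u \in M_3\,}.
\]
Now the key structural step: since $M_1$ is produced by the greedy algorithm on $G[V^S]$, it is always an independent set of $G$. By the definition of the inductive independence number, for every realization and every $u$ the set $\{v \in M_1 \colon v \succ u,\ \{u,v\} \in E\}$ has cardinality at most $\rho$. Therefore the conditional expectation in the display above is at most $\rho$, and we obtain
\[
\Ex{|C|} \;\le\; q c \rho \sum_{u \in V} \Pr{u \in M_3} \;=\; \Ex{|M_3|}\, q \rho c,
\]
as required.

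The only slightly subtle point is step 2: justifying that the sparsification trial for $v$ is independent of $\{u \in M_3\}$, so that the factor $q$ cleanly splits off. This is immediate from the algorithm's description (each node's sparsification coin is a fresh, independent Bernoulli$(q)$), but it is worth stating explicitly since without it one would not be allowed to condition on $u \in M_3$ and still apply \eqref{eq-c2}, which is formulated in terms of $X_v^1$ versus $X_v^2$ rather than $M_3$. The use of the inductive independence bound on $M_1$ (not on $M_3$, which is not even independent) is what makes the argument work.
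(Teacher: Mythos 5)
Your proof is correct and follows essentially the same route as the paper's: you condition on the smaller endpoint of each conflict edge being in $M_3$, peel off the fresh Bernoulli$(q)$ coin, apply the right inequality of~\eqref{eq-c2} to replace $M_2$ by $M_1$, and invoke the inductive-independence bound on $M_1$. The paper organizes the computation via the sets $C_v$ of larger-ordered conflicting nodes rather than a sum over ordered edges, but the decomposition, the conditioning, and the key structural facts used are identical.
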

\begin{proof}
We use $C_v=\{u\in V \mid \{u,v\}\in E \text{ and } v\prec u\}$ to denote, for a fixed node \( v \in V \), all nodes larger than \( v \) which are in conflict with \( v \). 
To get a bound on the size of \(C\), we now consider every node \(v \in M_3\) and add up the number of larger nodes which are in conflict with \(v\), enumerated in \(C_v \cap M_3\).  This way, we obtain \( \Ex{ \left \lvert C \right \rvert}=\sum\nolimits_{v\in V} \Pr{v \in M_3} \Ex{ \left \lvert C_v \cap M_3 \right \rvert \growingmid v \in M_3} \).
 
 The size of a set $C_v \cap M_3$ can be expressed in terms of the random variables $X_u^3$ as follows
  \[
  \Ex{ \left \lvert C_v \cap M_3 \right \rvert \growingmid v \in M_3} = \Ex{ \sum_{u \in C_v} X_u^3 \growingmid v \in M_3} = \sum_{u \in C_v} \Ex{X_u^3 \growingmid v \in M_3} \enspace.
  \]
  Next observe that \( \Ex{X_u^3 \growingmid v \in M_3} = q \Ex{X_u^2 \growingmid v \in M_3} \). In combination with the right inequality from~\eqref{eq-c2}, this gives \( \Ex{X_u^3 \growingmid v \in M_3} \le qc \Ex{X_u^1 \growingmid v \in M_3} \), which implies
 \[
  \Ex{ \left \lvert C_v \cap M_3 \right \rvert \growingmid v \in M_3} \leq q c \sum\nolimits_{u \in C_v} \Ex{X_u^1 \growingmid v \in M_3}  \enspace.
  \]
  Finally,  \(M_1\) is independent according to its construction. By definition of the inductive independence number, we have therefore \( \sum_{u \in C_v} X_u^1 \leq \rho \). Combining this with the above bound, we get
  \[
  \Ex{ \left \lvert C_v \cap M_3 \right \rvert \growingmid v \in M_3} \leq q \rho c \enspace.
  \]
 
 The estimation of \(\Ex{\left \lvert C_v \cap M_3 \right \rvert \growingmid v \in M_3}\) combined with the formula for \(\Ex{\left \lvert C \right \rvert}\) from above yields \(\Ex{ \left \lvert C \right \rvert } = \sum_{v \in V} \Pr{v \in M_3} \Ex{ \left \lvert C_v \cap M_3 \right \rvert \growingmid v \in M_3 } \leq \sum_{v \in V} \Pr{v \in M_3}  q \rho c = \Ex{ \left \lvert M_3 \right \rvert} q \rho c\), which proves the lemma. 
\end{proof}

\begin{proof}[Proof of Theorem~\ref{theorem-onlinedisks}]
 The estimated size of the output is exactly the set \(M_3\) from which one node per existing conflict is removed. As consequence, we get \(\Ex{\left\lvert M_4\right\rvert}=\Ex{\left\lvert M_3\right\rvert} - \Ex{\left\lvert C\right\rvert}\) as expectation of the size.
 From the analysis above, we know the estimated sizes of \(M_3\) and \(C\) and get
 \(\Ex{\left\lvert M_4\right\rvert}\geq \Ex{\left\lvert M_3\right\rvert} - q\rho c\Ex{\left\lvert M_3\right\rvert} = \left(1-q\rho c \right) \Ex{\left\lvert M_3\right\rvert} \geq \left(1 - q\rho c\right)\frac{q}{c^2 \rho}\OPT(w)\). Maximizing $\left(1 - q\rho c\right)\frac{q}{c^2 \rho}$ gives $q=\frac{1}{2\rho c}$ as specified in Algorithm~\ref{alg-onlinedisks}. Hence, the competitive ratio is \( 4 c^3 \rho^2 \).
\end{proof}
\label{sec:unweighted}
% !TeX root = onlinedisks.tex

\section{Weighted Independent Set}
\label{sec:weighted}

\subsection{Upper Bound}

In this section, we turn to the Max-Weight-IS problem. We construct an algorithm by dividing nodes into weight classes and running the algorithm for the unweighted problem on a randomly selected class. While this is a common approach in online maximization, we have to deal with the technical difficulties here. Neither $\lvert V^I \rvert$ nor the maximum weight are known a priori and the sample is generated by a stochastically similar rather than by the same distribution. 

The algorithm works as follows: First it ensures that for no node $v$ both $w^I(v)$ and $w^S(v)$ are positive at the same time. Although we lose a factor of 2 in the expected value of the solution this way, this has the advantage that we deal with structurally simpler weight distributions. Afterwards, a random threshold is determined based on the maximum $w^S$ weight. To compute the independent set, each node $v$ whose weight $w^I(v)$ respectively $w^S(v)$ beats this threshold is forwarded to Algorithm~\ref{alg-onlinedisks}. All other nodes are discarded.

\begin{algorithm}
\label{alg-onlineweighteddisks}
\caption{Weighted Online-Max-IS}
for each $v \in V$ flip a fair coin: if heads $w^I(v) := 0$, if tails $w^S(v) := 0$\;
$v^{\max} := \arg\max_{v \in V^S} w^S(v)$\;
$B := w^S(v^{\max})$\;
choose $ X \in \{-1,0,1, \ldots , \left \lceil \log \left( (c+2) \lvert V^S \rvert  \right) \right \rceil \} $ uniformly at random\;
$p := 2^{-X} B $\;
execute Algorithm~\ref{alg-onlinedisks}  with $V^I_p = \{v \in V^I \mid w^I(v) \geq p\}$ and $V^S_p = \{v \in V^S \setminus \{ v^{\max} \} \mid w^S(v) \geq p\}$\;
\end{algorithm}

\begin{theorem}
 \label{theorem-onlineweighteddisks}
  Algorithm~\ref{alg-onlineweighteddisks} is $O(\alpha \cdot \log(n))$-competitive, where $\alpha = O(\rho^2)$ is the competitive ratio of Algorithm~\ref{alg-onlinedisks} and $n = \Ex{ \lvert V^I \vert }$.
\end{theorem}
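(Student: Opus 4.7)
The plan is to reduce the weighted problem to the unweighted algorithm of Section~\ref{sec:unweighted} via a random geometric threshold. First, observe that the coin flip in the first line of Algorithm~\ref{alg-onlineweighteddisks} preserves stochastic similarity with the same constant $c$: the new probabilities $\Pr{w^I(v)=b}$ are exactly half of the old ones, and likewise for $w^S$, so the similarity ratios are unchanged. A simple coupling shows that $\Ex{\OPT(w^I)}$ after the flip is at least half of the expected pre-flip optimum, so we lose only a constant factor, and we may henceforth assume that for every $v$ at most one of $w^I(v),w^S(v)$ is positive.

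Next I would condition on $X=x$ and on $B$, so that $p_x=2^{-x}B$ is fixed. The instance fed to Algorithm~\ref{alg-onlinedisks} is then a $\{0,1\}$-weighted sampling-model instance: $\tilde w^I(v)=\mathbf{1}[w^I(v)\ge p_x]$ on $V^I$ and $\tilde w^S(v)=\mathbf{1}[w^S(v)\ge p_x]$ on $V^S\setminus\{v^{\max}\}$. For each $v\ne v^{\max}$ the indicator-level probabilities still differ by at most a factor $c$, so stochastic similarity is inherited and Theorem~\ref{theorem-onlinedisks} applies. Let $I^*$ be a maximum-weight IS of $G[V^I]$ for $w^I$ and set $T_x=|I^*\cap V^I_{p_x}|$; since $I^*\cap V^I_{p_x}$ is itself an independent set in $G[V^I_{p_x}]$, the returned set $S_x$ satisfies $\Ex{|S_x|\mid X=x}\ge \Ex{T_x\mid X=x}/\alpha$, and as every node of $S_x$ has $w^I$-weight at least $p_x$,
\[
 \Ex{w^I(S_x)\mid X=x}\ \ge\ \frac{p_x}{\alpha}\,\Ex{T_x\mid X=x}.
\]

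The third step is a geometric decomposition of $W^* := w^I(I^*)$. Writing $L=\lceil\log((c+2)|V^S|)\rceil$, the pointwise inequality $w^I(v)\le 2\sum_{x=-1}^{L} p_x\,\mathbf{1}[w^I(v)\ge p_x]$---valid whenever $w^I(v)$ lies in the dyadic range $[p_L,4B]$---summed over $v\in I^*$ yields $W^*\le 2\sum_{x=-1}^{L} p_x T_x+R$, where $R$ collects the $w^I$-mass of optimal nodes outside that range. Averaging uniformly over the $L+2=O(\log n)$ outcomes of $X$ and plugging in the conditional bound above gives
\[
 \Ex{w^I(S)}\ \ge\ \frac{1}{(L+2)\alpha}\sum_{x=-1}^{L} p_x\,\Ex{T_x}\ \ge\ \frac{\Ex{W^*}-\Ex{R}}{2\alpha\,(L+2)},
\]
and since $\Ex{|V^S|}=\Theta(n)$ by stochastic similarity, $L+2=O(\log n)$, so the claimed $O(\alpha\log n)$-competitive ratio follows once $\Ex{R}=O(\Ex{W^*})$ is established.

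The hard part, and the main obstacle I anticipate, is controlling $\Ex{R}$ together with the residual dependencies caused by $v^{\max}$. For the low tail (optimal nodes with weight below $p_L$) one bounds the total contribution by $\Ex{|V^I|}\cdot B/((c+2)|V^S|)$, which is $O(\Ex{B})$ via $\Ex{|V^I|}\le c\,\Ex{|V^S|}$ from stochastic similarity; and since $W^*$ dominates the maximum singleton $w^I$-weight, which in turn matches $\Ex{B}$ up to a constant by stochastic similarity, this piece is $O(\Ex{W^*})$. The high tail (nodes with weight exceeding $4B$) is subtler because $v^{\max}$ and the high-$w^I$-values are entangled through the shared randomness of $w^I,w^S$; the reason Algorithm~\ref{alg-onlineweighteddisks} excludes $v^{\max}$ from the sample is precisely to decouple these dependencies, so that conditional on $v^{\max}$ and $B$ the remaining nodes retain stochastic similarity with constant $c$ and the expected $w^I$-mass above $4B$ is controlled by $\Ex{W^*}$. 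Combining these bounds with the factor $2$ lost in the coin flip yields the stated competitive ratio.
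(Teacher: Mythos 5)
The overall strategy---coin flip to decouple $w^I$ and $w^S$, then a geometric level-set decomposition driven by a random threshold $p=2^{-X}B$---is the same as the paper's, and the coin-flip preservation of stochastic similarity and the low-tail bookkeeping are essentially fine. But the argument has a genuine gap exactly where you flagged it, and it is not a detail you can defer: the claim that ``conditional on $v^{\max}$ and $B$ the remaining nodes retain stochastic similarity with constant $c$'' is in general \emph{false}. Conditioning on $B=w^S(v^{\max})$ is conditioning on the maximum sample weight; under the deferred-decisions view (fix $\hat w(v)=\max\{w^I(v),w^S(v)\}$ for every $v$ and leave only the per-node $V^S/V^I$ coin), this event forces every node $u$ with $\hat w(u)>B$ to land in $V^I$ with probability $1$. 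For those nodes $\Pr{u\in V^S\mid B, v^{\max}}=0$ while $\Pr{u\in V^I\mid B, v^{\max}}=1$, which destroys the factor-$c$ similarity that Theorem~\ref{theorem-onlinedisks} needs. So you cannot invoke the unweighted guarantee on the thresholded sub-instance directly once you have conditioned on $B$.

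This is exactly why your high-tail bound has no content: you observe that the $w^I$-mass of $I^*$ above $4B$ must be ``controlled by $\Ex{W^*}$'', but the only thing you can say without further ideas is the trivial bound $\Ex{R_{\mathrm{high}}}\le\Ex{W^*}$, which is useless for showing $\Ex{W^*}-\Ex{R}=\Omega(\Ex{W^*})$. The paper avoids the problem by a case split rather than by conditioning on $B$ alone. After fixing $\hat w$, let $v^*$ be the (deterministic) $\hat w$-maximum and $B^*=\hat w(v^*)$. In Case~2 ($B^*\le\tfrac14\hat w(S^*)$) they condition on the single event $v^*\in V^S$; by stochastic independence this does not perturb the other coins, and it makes $B=B^*$ \emph{deterministic} with $\hat w(v)\le B$ for every other $v$, so there is no high tail at all and the thresholded instance is a legitimate sampling-model instance with a deterministic threshold. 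In Case~1 ($B^*>\tfrac14\hat w(S^*)$) they condition on $v'\in V^S$ (the second-highest node) and argue via a singleton pickup at $X\in\{-1,0\}$. To repair your proof you would need this case distinction (or something equivalent that pins down the identity of the argmax node before conditioning); as written, the crucial ``stochastic similarity is inherited'' step and the high-tail control are unjustified.
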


\begin{proof}
After executing the first line of Algorithm~\ref{alg-onlineweighteddisks}, it holds that at most one of the two weights of a node is positive. For the remainder of the proof, we assume w.l.o.g. that already the original weights $w^I$ and $w^S$ satisfy this condition. Observe that this modification of the weights decreases the expected value of the optimal solution by at most two and preserves the stochastic similarity and independence conditions of the graph sampling model. Let $\hat{w}(v) = \max\{w^I(v), w^S(v)\}$. It holds $(w^I(v), w^S(v)) = (\hat{w}(v),0)$ or $(w^I(v), w^S(v)) = (0,\hat{w}(v))$.

In the following, we make a case distinction depending on properties of the largest and second largest weights. Observe that assumptions on the size of the largest weight influence the probability space of all weights and lead to vast dependencies. In order to cope with this problem, we work on the following conditional probability space: For each node $v \in V$, we assume that  $\hat{w}(v)$ is fixed arbitrarily. Besides, we remove those nodes from our consideration for which $\hat{w}(v)=0$. This way, $V = V^I \,\dot\cup\, V^S$ with
$V^I=\{v \in V| w^I(v) > 0\}$ and $V^S = \{ v \in V | w^S(v) > 0\}$. Stochastic similarity gives 
\[
\Pr{w^S(v) = \hat{w}(v)} \ge \frac{1}{c}  \Pr{w^I(v) = \hat{w}(v)} 
\enspace ,
\]
which, by $V = V^I \,\dot\cup\, V^S$, implies
\begin{equation}\label{eqnWeighted}
\Pr{v \in V^S} \ge \frac{1}{c+1} \enspace ,
\end{equation}
for each node $v \in V$, independent of the outcome of weights of other nodes.

Now let $S^\ast \subseteq V$ be an independent set maximizing $\sum_{v \in S^\ast} \hat{w}(v)$ and let $S$ be the random variable denoting the independent set that our algorithm computes. Let $v^\ast \in \arg\max_{v \in V} \hat{w}(v)$ and $B^\ast := \hat{w}(v^\ast)$. Further $v ' \in \arg\max_{v \in V\backslash \{ v^\ast \}} \hat{w}(v)$ and $B'=\hat{w}(v')$, i.e. $v^\ast$ and $v'$ are the nodes with the highest respectively second highest weight with respect to $\hat{w}(v)$.

\paragraph{Case 1: $B^\ast > \frac{1}{4} \hat{w}(S^\ast)$}

We condition on the event that $v' \in V^S$, which by Equation~\eqref{eqnWeighted} holds with probability at least $\frac{1}{c + 1}$.

If $B^\ast \geq 2 B'$, we further condition on $X = -1$. In this case, there are only two possible inputs for Algorithm~\ref{alg-onlinedisks}: Either $V^I_p = \{ v^\ast \}$ and $V^S_p = \emptyset$ or $V^I_p = V^S_p = \emptyset$. As this does not change the output, we may as well assume the input in the latter case to be $V^I_p = \emptyset$ and $V^S_p = \{ v^\ast \}$. In this case, the unweighted algorithm returns an independent set of expected value at least $\frac{B^\ast}{\alpha (c+1)}$. Hence, we get $\Ex{w^I(S) \growingmid v' \in V^S} \geq \frac{1}{\lceil \log((c+2) \lvert V^S \rvert) \rceil + 1} \cdot \frac{B^\ast}{\alpha (c+1)}$. 

If $B^\ast < 2 B'$, we condition instead on $X = 0$. The input for Algorithm~\ref{alg-onlinedisks} can now be multiple nodes. However, each of them has value at least $\frac{B^\ast}{2}$. Since -- as long as there is a node in $V^I$ -- the maximum-cardinality independent set among these nodes has (trivially) size at least $1$, the unweighted algorithm returns an independent set of value at least $\frac{B^\ast}{2 \alpha (c+1)}$. So, we now get $\Ex{w^I(S) \growingmid v' \in V^S} \geq \frac{1}{\lceil \log((c+2) \lvert V^S \rvert) \rceil + 1} \cdot \frac{B^\ast}{2 \alpha (c+1)}$.

In either case, multiplying with $\Pr{v' \in V^S} \geq \frac{1}{c + 1}$ and using that $B^\ast > \frac{1}{4} \hat{w}(S^\ast)$ yields $\Ex{w^I(S)} = \Omega\left(\frac{1}{\log \lvert V^S \rvert} \right) \Ex{\OPT(w^I)}$. As $\lvert V^S \lvert = O(\lvert V^I \rvert)$ with high probability, this yields the result.

\paragraph{Case 2: $B^\ast \leq \frac{1}{4} \hat{w}(S^\ast)$}

In this case, we condition on the event that $v^\ast \in V^S$. By Equation \ref{eqnWeighted} this event holds with probability at least $\frac{1}{c + 1}$.

For $j = 0,\ldots \lceil \log \lvert V \rvert \rceil+1$, let now $S^\ast_j = \{ v \in S^\ast \mid v \neq v^\ast, \hat{w}(v) \geq 2^{-j} B^\ast \}$. Let us observe that these sets make a significant part of the value of $S^\ast$. Firstly, after removing all nodes $v$ with $\hat{w}(v) < \frac{1}{2 \lvert V \rvert}$, the value of $S^\ast$ can have decreased by at most $\lvert V \rvert \frac{B^\ast}{2 \lvert V \rvert} \leq \frac{\hat{w}(S^\ast)}{2}$. Also, $v^\ast$ by assumption has value at most $\frac{1}{4} \hat{w}(S^\ast)$. That is, after removing both kinds of nodes from $S^\ast$, it has still value at least $\frac{1}{4} \hat{w}(S^\ast)$. For this reason, we get $\sum_{j=0}^{\lceil \log \lvert V \rvert \rceil + 1} 2^{-j} B^\ast \lvert S^\ast_j \rvert \geq \frac{1}{2} \cdot \frac{1}{4} \hat{w}(S^\ast)$.

On the other hand, if $X = j$ and $v^\ast \in V^S$, the algorithm returns a solution of expected size at least $\frac{\lvert S^\ast_j \rvert}{\alpha}$. 

Let now $L = \max\{\lceil \log |V| \rceil, \left \lceil \log \left( (c+2) \lvert V^S \rvert  \right) \right \rceil \}$. For the time being, let us assume that the algorithm chooses the value of $X$ from $\{-1,0,1, \ldots , L\}$ instead of $\{-1,0,1, \ldots , \left \lceil \log \left( (c+2) \lvert V^S \rvert  \right) \right \rceil \}$. Then, we obtain
\[\Ex{w^I(S) \growingmid v^\ast \in V^S} \geq \frac{1}{L + 2} \sum_{j=-1}^{\lceil \log \lvert V \rvert \rceil + 1} \frac{\lvert S^\ast_j \rvert}{\alpha} \geq  \frac{1}{L + 2} \cdot \frac{1}{8 \alpha} \cdot \hat{w}(S^\ast) \enspace .
\]
Finally, observe that the assumption of $X$ being chosen from $\{-1,0,1, \ldots , L\}$ fails only if
$(c+2) \lvert V^S \rvert< \lvert V \rvert$. This event, however, occurs only with probability at most $\frac{1}{\lvert V \rvert}$. Let us assume pessimistically that whenever this unlikely event occurs the algorithm outputs a set of weight 0 whereas the modified algorithm returns a set of weight $\hat{w}(S^\ast)$. Even this decreases the expected value by an additive term of at most $\hat{w}(S^\ast)/ \lvert V \rvert$. This completes the analysis for the second case and, hence, proves the theorem.
\end{proof}

\subsection{Lower Bound}
The approach taken in Algorithm~\ref{alg-onlineweighteddisks} seems fairly generic. However, the competitive ratio turns out to be almost optimal not only for the general problem but even for all special cases mentioned in the introduction.
\begin{theorem}\label{theorem:lower-bound}
For any algorithm for online maximum-weight independent set, we have $\Ex{\ALG} \linebreak= \Omega\left( \frac{\log^2 \log n}{\log n} \right) \Ex{\OPT}$, even in interval graphs, and even in the secretary and prophet-inequality model.
\end{theorem}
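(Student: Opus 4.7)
By Yao's minimax principle, it suffices to exhibit a distribution over weighted interval instances on which every deterministic online algorithm achieves $\Ex{\ALG}=O(\log^2\log n/\log n)\cdot\Ex{\OPT}$. I would work in the secretary model first; the prophet-inequality bound then follows by re-encoding the uniformly random permutation as independent product weight distributions on the same underlying interval graph (essentially reversing the simulation direction of Proposition~\ref{proposition:sampling1}).

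The hard instance is a nested hierarchy of intervals organized as a balanced $b$-ary tree of depth $k:=\Theta(\log n/\log\log n)$, with $b$ chosen so that $b^k=\Theta(n)$. Every level-$j$ interval is the disjoint union of its $b$ level-$(j{+}1)$ children, so that inside any root-to-leaf strand an independent set is confined to one level. A hidden \emph{jackpot level} $j^\star\in\{0,\dots,k{-}1\}$ is drawn uniformly at random, and weights are assigned so that the set of all level-$j^\star$ intervals forms the unique (up to constants) maximum-weight independent set, while weights on every other level come from a decoy distribution whose per-observation distinguishing information about $j^\star$ is calibrated so that a $k$-way hypothesis test on $j^\star$, via Fano's inequality with a union bound over levels, requires $\Omega(\log k)$ samples from each level to be answered with constant overall confidence.

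The crux is a sample-commit tradeoff lemma: in the uniformly random arrival order, by a coupon-collector argument, amassing $\Omega(\log k)$ samples from every level simultaneously requires $\Omega(k\log k)$ arrivals. Any policy that commits before this identification threshold lies at the correct level with probability only $O(1/k)$, and --- because of the nested structure --- a wrong commit permanently blocks an $\Omega(1)$ fraction of the jackpot intervals in its subtree. A policy that postpones commits past the threshold has, by the random order, already had to pre-reject an $\Omega(1)$ fraction of the jackpot intervals per strand, since they are spread uniformly over the arrival stream. Balancing the two error regimes, together with the decoy calibration which fixes the exact rate at which the posterior on $j^\star$ can concentrate, produces the claimed bound $\Ex{\ALG}=O(\log^2\log n/\log n)\cdot\Ex{\OPT}$.

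The main obstacle is the careful calibration of the decoy distribution: it must simultaneously ensure that (i) OPT remains genuinely dominated by the jackpot level (so a correct commit pays off by the right factor), (ii) individual observations are just barely informative so that the per-level sample requirement is exactly $\Theta(\log k)$, and (iii) no adaptive interleaving of partial commits and continued sampling can circumvent the tradeoff. Point (iii) I would handle by a martingale/concentration argument on the algorithm's posterior over $j^\star$, showing that the posterior cannot concentrate faster than the cumulative information flow from observations bounded via Fano at each arrival; the geometric weight structure across levels then amplifies every identification error into an irrecoverable loss.
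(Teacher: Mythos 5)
Your proposal is a genuinely different construction and argument from the paper's, and I think it contains a fundamental gap. The paper does not hide any parameter from the algorithm: it uses a \emph{fixed} nested interval graph that is known in advance, and gives every node on level $i$ a \emph{publicly known} distribution (weight $d^{h-i}$ with probability $p=1/2h$, and weight $0$ otherwise). All levels are symmetric in expected per-path value; there is no ``jackpot level.'' The hardness is not about learning a hidden level — it is about irrevocable commitment under uncertainty: whether to accept a low-value descendant now, or wait for a possibly zero-weight ancestor that would cover many more paths. The paper then exhibits the exactly optimal online policy (\textsc{HighStakes}: accept a non-zero node if and only if no ancestor can still be selected), proves its optimality by a short backward induction, and shows the probability that \emph{any} node on a fixed root-to-leaf path gets accepted is at most $\sum_{j=1}^h p/j = O(p\log h)$. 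Comparing against the greedy offline solution, which covers each path except with probability $(1-p)^h = O(1)$, gives the $\Omega(\log n/\log^2\log n)$ gap directly and elementarily.

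Your Fano-inequality/hypothesis-testing route does not apply cleanly here, and I do not see how to repair it. In the secretary model the weights are \emph{deterministic} given the adversary's instance — the only randomness is in the arrival order — so after observing a single node per level the algorithm knows each level's weight exactly, and your $k$-way hypothesis test is resolved in $O(k)$ observations, not $\Omega(k\log k)$. Since there are $\Theta(n)\gg k\log k$ nodes, the algorithm could simply wait through a negligible prefix, identify the jackpot level with certainty, and then greedily accept it, achieving $\Omega(1)$ of OPT. To make the test genuinely noisy you would need randomized decoy weights, pushing you toward the prophet model where the distributions are \emph{known a priori}, so again there is nothing to learn about $j^\star$ from the observed weights beyond what the algorithm already has. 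If you instead calibrate the decoys so close to the jackpot distribution that they are statistically indistinguishable per observation, then the decoy levels contribute comparably to OPT and the ``jackpot dominates'' property (your point (i)) collapses — precisely the tension you flag as the ``main obstacle,'' and I believe it is not resolvable in this framework. The paper sidesteps the entire issue by not relying on an information asymmetry at all; the lower bound is against an algorithm with complete distributional knowledge.

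Two smaller remarks. First, your claim that the prophet bound ``follows by re-encoding the uniformly random permutation as independent product weight distributions'' goes in the wrong direction; the paper achieves both models simultaneously by using a fixed graph with independent known per-node distributions \emph{and} random arrival order, making the adversary weaker than in either model. Second, your ``martingale/posterior concentration'' plan for point (iii) would be proving a statement about learning dynamics in a setting where, as above, there is nothing hidden to learn; the quantity you actually need to control is the probability that the first eligible ancestor on a path arrives before its descendants, which is what produces the harmonic sum in the paper.
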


\begin{proof}
We will use a fixed graph, which is known to the algorithm in advance. The node weights are drawn independently from probability distributions that are also known in advance. The precise outcome, however, is only revealed at time of arrival. The times of arrival are in uniform random order. This way, we restrict the adversary to become weaker than in both the secretary and the prophet-inequality models.

We set 
\[
d = \left\lceil \max\left\{ 4 \left(2 \frac{\log n}{\log \log n} + 2\right)^2 + 1, \sqrt{\log n} \right\} \right\rceil
\]
and construct the graph by nesting intervals into each other, starting with an interval of length $1$ and continuing by always putting $d$ intervals of length $d^{-i}$ next to each other into an interval of length $d^{-i+1}$. As we reach $n$, all levels except for the last one are complete. This way, we get a graph that is a complete $d$-ary tree with the only difference that there are ``shortcuts'' on the paths from the root to the leaves skipping over some levels.

For convenience, we remove the last level if it is incomplete. This way, we get a graph having at most $n$ nodes. For the number of levels $h$ it has, we get
\[
\left\lceil \frac{\log n}{\log d} \right\rceil - 1 \leq h \leq \left\lceil \frac{\log n}{\log d} \right\rceil + 1 \enspace,
\]
and in particular $h = \Theta\left( \frac{\log n}{\log \log n} \right)$.

For a node $v$ on level $i$, we set the weight at random by
\[
w(v) = \begin{cases}
d^{h-i} & \text{with probability $p$} \\
0 & \text{with probability $1-p$}
\end{cases}
\]
where $p = \frac{1}{2 h}$.

From now on, we will be focussing on the paths from the root node to the leaves that include $h$ nodes each. There are in total $d^h$ such paths. In every independent set, there can be at most one node on any path. In case a node has non-zero weight, its weight directly corresponds to the number of paths it lies on. Therefore, we can equivalently express the weight of an independent set by the number of paths that are covered, i.e., on which a non-zero node is selected.

To show the lower bound, we show that on this graph no online algorithm can be better than the following \textsc{HighStakes} policy. This algorithm accepts a node if and only if it has non-zero weight and there is no more ancestor to come that could cover this node. In other words, we reject a node of non-zero weight if there is a chance that an ancestor could still be selected (because it has not arrived and none of its other descendants have been selected so far). Otherwise, we accept it.

\begin{lemma}
\textsc{HighStakes} is optimal on this graph.
\end{lemma}

\begin{proof}
We show optimality by induction. Suppose that in a graph of $n$ nodes, we have seen all but $n'$ nodes in the random order. We now claim that for any $n'$ it is optimal to continue using the \textsc{HighStakes} policy. For $n' = 0$ this is trivial because we do not have any further choice. So let us turn to the case $n' + 1$. In this case, we are presented a node and have to decide whether or not to select it. If the node has zero weight, it is obviously optimal to reject it. So, let us assume that this node $v$ is located in level $i$ and has weight $w(v) = d^{h-i}$. We now distinguish between the cases that this node can still be covered by an ancestor or not.

The node cannot be covered by an ancestor if every ancestor has already occurred in the random order or cannot be selected anymore because some other descendant has already been selected. In this case, it is optimal to select the node because it covers all paths it lies on and does not prevent nodes on different paths from being added.

If the node can still be covered by an ancestor, we claim that it is better to not select the node. Roughly speaking, this is due to the fact that there are at least $(d-1) w(v)$ paths that cannot be covered by this very ancestor anymore (which will have non-zero weight with probability $p$) while we only gain a weight of $w(v)$ in the solution. For a formal proof, we use the induction hypothesis that subsequently it is optimal to use the \textsc{HighStakes} policy.

\begin{claim*}
Given a path of $k$ vertices that have not been shown or excluded up to now, let $P(k)$ be the probability that this path is covered by \textsc{HighStakes}. Then
\[
P(k) \geq \frac{p}{2k} + P(k-1) \enspace.
\]
\end{claim*}

\begin{proof}
The path is only covered in the next step, if the very first vertex shows up and has non-zero weight. In any other case (a different vertex shows up or this vertex has zero weight), we keep on waiting. In these cases, we are confronted with a path of $k - 1$ vertices. Therefore, we get the following recursion for $P(k)$ when $k > 0$
\[
P(k) = \frac{1}{k} \left(p + (1-p) P(k-1)\right) + \frac{k-1}{k} P(k-1) = \frac{p}{k} + \left( 1 - \frac{p}{k} \right) P(k-1) \enspace,
\]
and $P(0) = 0$.

To bound the growth of this recursion, we use that $P(k - 1)$ is bounded by the probability that there is any non-zero vertex on the path at all, which is given by $1 - ( 1 - p )^{k-1}$. For $k \leq h$, by the definition of $p$, we have $\left(1 - p\right)^{k - 1} \leq \left(1 - \frac{1}{2h} \right)^h \geq \frac{1}{2}$ and therefore $P(k - 1) \leq \frac{1}{2}$. Putting this into the recursion, we get for $k \leq h$ that
\[
P(k) \geq \frac{p}{2k} + P(k-1) \enspace.
\]
\end{proof}

Hence, when accepting this current vertex now, we reduce the probability of at least $(d-1) b$ paths to be covered by at least $\frac{p}{2h}$. We have $h \leq 2 \frac{\log n}{\log \log n} + 2$ and $d \geq 4 \left(2 \frac{\log n}{\log \log n} + 2\right)^2 + 1$. That is $\frac{2h}{p} = 4 h^2 \leq d - 1$, or equivalently 
\[
\frac{p}{2h} (d - 1) w(v) \geq w(v) \enspace.
\]
The overall expected value of the solution is smaller when accepting the currently considered node in comparison to rejecting it. Therefore, it is better to reject it.
\end{proof}

Having shown that \textsc{HighStakes} is optimal, we only need to compare its expected value with the one of a feasible offline solution. \textsc{HighStakes} accepts the $j$th node on a path only if it has non-zero weight and if the $j-1$ nodes on higher levels occur before this node in the random order. The combined probability of this event is $\frac{p}{j}$. Therefore, the overall probability that any node on a path of length $h$ is accepted is at most $\sum_{j=1}^h \frac{p}{j} = O(p \log h)$. The expected value of the solution computed by \textsc{HighStakes} is exactly the expected number of paths that are covered. By the above considerations, we get
\[
\Ex{\ALG} \leq d^h O(p \log h) = O \left( d^h \frac{1}{2h} \log \log n\right) = O \left( d^h \frac{\log^2 \log n}{\log n} \right) \enspace.
\]
On the other hand, we get a feasible offline solution by greedily accepting vertices going down the tree. In this procedure a path of length $h$ is only left uncovered if all nodes on it have zero weight. This happens with probability $(1 - p)^{h} = (1 - \frac{1}{2h})^h \leq \frac{1}{\sqrt{e}}$. Therefore, this solution has value
\[
\Ex{\OPT} \geq d^h \left( 1 - \frac{1}{\sqrt{e}} \right) = \Omega(d^h) \enspace.
\]
In total we have $\Ex{\ALG} = \Omega\left( \frac{\log^2 \log n}{\log n} \right) \Ex{\OPT}$, showing the claim.
\end{proof}

\section{Arrivals and Departures}
Interval graphs are often motivated by problems in which two tasks cannot be processed at the same time. Disk graphs in turn capture the requirement of spatial separation. In this section, we introduce an approach to combine both temporal and spatial separation. Again, we assume that requests are nodes in a graph $G = (V, E)$, which models the geometric properties. Furthermore, each node $v \in V$ in this graph has an arrival time $\arrival(v) \in \RR$ and a departure time $\departure(v)$. We say that $u \in V$ and $v \in V$ are conflicting if $\{ u, v \} \in E$ and $[\arrival(u), \departure(u)] \cap [\arrival(v), \departure(v)] \neq \emptyset$. For a set of nodes $S$, we define $S[a, d] = \{v \in S \mid \arrival(v) \geq a, \departure(v) \leq d \}$ and $S[a, x, d] = \{v \in S \mid a \leq \arrival(v) \leq x \leq \departure(v) \leq d \}$. Still, we make no assumption on the order in which requests in a period are presented to the online algorithm. In particular, this includes the most natural case, in which 
requests are ordered by arrival times.

In the following, we present an algorithm for this combined problem. We assume to be given an algorithm $\mathcal{A}$ that approximately solves the online (weighted) independent set problem on the graph $G$ with competitive ratio $\gamma$. Using this algorithm, we achieve $O(\gamma \log n)$ as the overall competitive ratio, where $ n = \lvert V^I \rvert $. In the previous sections, we devised such algorithms for the unweighted independent set problem with $\gamma = O(\rho^2)$ and the weighted variant with $\gamma = O(\rho^2 \log n)$. To introduce arrival and departure times, we use a recursive approach that is inspired by a divide-and-conquer algorithm for maximum-weight independent set in rectangle graphs by Agarwal et al.~\cite{Agarwal1998}.

Algorithm \textsc{Split} works as follows. It uses the set $V^S$ as a guide to split the set $V^I$ into three parts. For this purpose, it determines the median $x_{\text{med}}$ of all arrival times in $V^S$ and splits $V^I$ into the set of requests that are active at time $x_{\text{med}}$ and the ones that depart before $x_{\text{med}}$ respectively arrive after $x_{\text{med}}$. Since all requests of the first kind are active simultaneously, we can treat this subproblem by using the given algorithm. The remaining, inactive requests, in contrast, can be treated as two independent, smaller instances. On these we can apply the algorithm recursively. Our algorithm decides randomly, whether to solve the active requests or to recursively invoke and merge the two currently inactive instances.

Formally, the algorithm sets $\tilde{n} = (c+2) \lvert V^S \rvert$, $\chi = 1 - \frac{1}{2(c+2)}$, and $\delta = \frac{24 (c+1)^2 \ln \tilde{n}}{\chi}$. Afterwards, it executes the recursive procedure \textsc{Split}~($- \infty$, $\infty$, $\tilde{n}$).

\begin{algorithm}
\DontPrintSemicolon
\If{$k \geq \delta$}
{Let $x_{\text{med}} = \median_{v \in V^S[a,d]} \arrival(v)$\;
w/prob $q = \frac{2 \gamma}{\alpha \log(k) + \beta}$ run $\mathcal{A}$ on all $v$ such that $a \leq \arrival(v) \leq x_{\text{med}} < \departure(v) \leq d$\;
otherwise run \textsc{Split}($a$, $x_{\text{med}}$, $\chi k$) and \textsc{Split}~($x_{\text{med}}$, $d$, $\chi k$) and join the output\;
}
\Else{
Choose each node $v \in V^I[a, d]$ with probability $\frac{1}{\delta}$.
} 
\caption{\textsc{Split}($a$, $d$, $k$)}
\end{algorithm}

\begin{theorem}
\label{theorem:arrivals}
The algorithm is $O(\gamma \log n)$-competitive.
\end{theorem}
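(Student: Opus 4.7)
My plan is to charge each node of a fixed maximum-weight independent set $S^*$ to its unique \emph{home call}: the deepest call $\textsc{Split}(a,d,k)$ on $v$'s recursion path whose median $x_{\text{med}}$ lies inside $[\arrival(v),\departure(v)]$. This induces a partition $S^* = \bigsqcup_C S^*_C$ indexed by the calls of the recursion tree. Within any call $C$ the set $S^*_C$ is independent in $G$ and entirely active at $x_{\text{med}}$, so it is a feasible candidate on which $\mathcal{A}$ achieves a $1/\gamma$-approximation whenever $C$ chooses to invoke $\mathcal{A}$. I will show that the algorithm actually reaches $C$ and then invokes $\mathcal{A}$ with probability $\Omega(1/\log n)$ uniformly in $C$, which multiplied by the $1/\gamma$ approximation and summed over calls yields $\Ex{\ALG} = \Omega(\Ex{\OPT}/(\gamma \log n))$, hence the theorem.

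Three ingredients need verification. First, the recursion depth: since $k$ shrinks by a factor $\chi<1$ per level and the base case is reached once $k<\delta$, the depth is $h = O(\log(\tilde n/\delta)/\log(1/\chi)) = O(\log n)$. Second, faithfulness of $k$: the stochastic similarity condition \eqref{eq:ccondition} combined with Chernoff bounds guarantees that $|V^I[a,d]|$ and $|V^S[a,d]|$ are within constant factors of their expectations in every call, so the median of $V^S[a,d]$ also splits $V^I[a,d]$ into pieces of size at most $\chi k$. The threshold $\delta = \Theta((c+1)^2 \ln \tilde n/\chi)$ is calibrated exactly so that a union bound over all $O(\tilde n)$ internal calls succeeds with probability $1 - o(1)$. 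Third, telescoping of the reach probability: the probability of reaching a depth-$j$ call and then invoking $\mathcal{A}$ equals $q_j \prod_{i<j}(1-q_i)$; writing $k_j = \chi^j \tilde n$, the expression $\alpha \log k_j + \beta$ is affine in $j$, so $q_j$ has the form $\mu/(C-j)$ for constants $\mu, C$. A telescoping identity in the spirit of $\prod_{i<j}(1 - 1/(C-i)) = (C-j)/C$ then gives $q_j \prod_{i<j}(1-q_i) = \Omega(1/h)$ uniformly in $j$, provided $\alpha, \beta$ are chosen so that $\mu = \Theta(1)$ and $C = \Theta(h)$.

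The hardest part will be making the concentration arguments simultaneously valid across all $O(\tilde n)$ internal calls of the recursion tree, which is where the somewhat generous factor $(c+1)^2 \ln \tilde n$ in $\delta$ is required; conditioning on the event that every call has its $k$-parameter faithful must not disturb the distributions used in the per-call analysis, which is ensured by the stochastic independence assumption. A secondary obstacle is the base case $k<\delta$: by the concentration step such a leaf has $|V^I[a,d]| = O(\delta)$ with high probability, so sampling each node at rate $1/\delta$ selects $O(1)$ nodes in expectation. A greedy conflict-resolution step in the spirit of Algorithm~\ref{alg-onlinedisks} retains a constant fraction, and the OPT-contribution restricted to a leaf is within a factor $O(\delta) = O(\log n)$ of what sampling retains, which is already absorbed in the overall $O(\gamma \log n)$ budget. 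Finally, verifying that $\mathcal{A}$'s guarantee applies verbatim when invoked on the active set of a call requires that the restriction of $V^S$ to the call serves as a valid sample graph with the correct stochastic similarity, which follows because the factor-$c$ bounds are preserved under conditioning on node-location events.
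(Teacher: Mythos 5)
Your sketch is essentially the paper's proof in unrolled form: the home-call decomposition of $S^*$ is the combinatorial version of the inequality $\OPT(w^I)[a,d] \leq \OPT(w^I)[a,x_{\text{med}},d] + \OPT(w^I)[a,x_{\text{med}}] + \OPT(w^I)[x_{\text{med}},d]$ applied recursively, and your telescoping identity $q_j\prod_{i<j}(1-q_i)=1/C$ for $q_j=1/(C-j)$ is precisely what makes the paper's induction on $k$ close (this is why $\alpha$ is set to $2\gamma/(-\log\chi)$). Your concentration step is \lref[Lemma]{lemma:balanced}. The one place the sketch is too loose is the base case: the probability of even \emph{reaching} a leaf at depth $j$ is $\prod_{i<j}(1-q_i)=(C-j)/C$, which shrinks with $j$, so the leaf acceptance rate is $\frac{C-j}{C}\cdot\frac{1}{\e\delta}$ rather than $\Omega(1/\delta)$; you need to verify $C-j\geq\Theta(\delta/\gamma)$ for every leaf so that this matches the internal-call rate $\frac{1}{\gamma C}$, which is exactly what the paper's choice $\beta=2\e\delta$ is calibrated to guarantee --- simply saying the leaf loss is ``absorbed'' skips that check.
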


To show the bound, we will first show that in every execution of \textsc{Split}, the number of requests dealt with in $V^I$ is significantly reduced -- even though the set $V^S$ is used for splitting.

Our proof builds on the fact that it is very unlikely that there are much more $V^S$ requests that $V^I$ requests or vice versa in an interval that is large enough. Formally, we call the sets $V^S$ and $V^I$ \emph{balanced} if for any $a \leq d$ with $\lvert (V^S \cup V^I)[a, d]\rvert \geq 24 (c+1)^2 \ln n$, we have $\frac{1}{c+1} \lvert V^S[a, d] \rvert \leq \lvert V^I[a, d] \rvert \leq (c+1) \lvert V^S[a, d] \rvert$.

Similar to the proof of Theorem~\ref{theorem-onlineweighteddisks}, we assume that without loss of generality that for each $v \in V$, we have $\Pr{v \not\in V^S, v \not\in V^I} = 0$ as stochastic similarity and independence are preserved in this conditioned probability space. This way, the set $V^S \cup V^I$ is not a random variable anymore. As $\lvert V^S \cup V^I \rvert = O(\lvert V^I \rvert)$ with high probability, we redefine $n = \lvert V^S \cup V^I \rvert$.

\begin{lemma}
\label{lemma:balanced}
The sets $V^S$ and $V^I$ are balanced with probability at least $1 - \frac{1}{n^2}$.
\end{lemma}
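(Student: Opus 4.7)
The plan is a standard Chernoff + union bound argument, exploiting the conditioning used in this section: $\Pr{v \notin V^S,\, v \notin V^I} = 0$. Under this conditioning, $V^S \cup V^I$ coincides with the deterministic set $V$, so $\lvert (V^S \cup V^I)[a,d]\rvert = \lvert V[a,d]\rvert$ is non-random. Applied to $\{0,1\}$-weights, stochastic similarity gives $\Pr{v \in V^S} \leq c\,\Pr{v \in V^I}$ and vice versa; combining this with $\Pr{v \in V^S} + \Pr{v \in V^I} \geq 1$ (the conditioning) yields $\Pr{v \in V^S}, \Pr{v \in V^I} \geq \tfrac{1}{c+1}$ for every $v \in V$, and stochastic independence makes these events independent across nodes.

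Next I would reduce to a finite family of intervals. Although $(a,d)$ ranges over uncountably many pairs, the sets $V^I[a,d]$ and $V^S[a,d]$ only change when $a$ crosses one of the at most $n$ arrival times or $d$ crosses one of the at most $n$ departure times. Hence at most $O(n^2)$ distinct triples $(V^S[a,d], V^I[a,d], V[a,d])$ can arise, and it suffices to verify the balance condition on this finite collection. Fix one such interval with $m := \lvert V[a,d]\rvert \geq 24(c+1)^2 \ln n$, and set $\mu^S := \Ex{\lvert V^S[a,d]\rvert}$, $\mu^I := \Ex{\lvert V^I[a,d]\rvert}$. The per-node bounds above give $\mu^S, \mu^I \geq \tfrac{m}{c+1} \geq 24(c+1)\ln n$, together with $\mu^I \leq c\mu^S$ and symmetrically.

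Choosing $\epsilon = \Theta(1/c)$ small enough that $c(1+\epsilon)/(1-\epsilon) \leq c+1$, a Chernoff bound on the independent indicators $\mathbf{1}[v \in V^S]$ and $\mathbf{1}[v \in V^I]$ shows that each of the bad events $\lvert V^S[a,d]\rvert < (1-\epsilon)\mu^S$ and $\lvert V^I[a,d]\rvert > (1+\epsilon)\mu^I$ has probability at most $n^{-4}$. On their joint complement one has $\lvert V^I[a,d]\rvert \leq (c+1)\lvert V^S[a,d]\rvert$; the reverse direction is symmetric. A union bound over both directions and the $O(n^2)$ intervals then yields total failure probability at most $1/n^2$. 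The only genuine obstacle is bookkeeping: one must choose $\epsilon$, the numerical constant $24$, and the factor $(c+1)^2$ consistently so that the Chernoff tail comfortably beats the $O(n^2)$ union bound — which is precisely why the threshold in the definition of \emph{balanced} is $24(c+1)^2 \ln n$ rather than anything smaller.
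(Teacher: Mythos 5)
Your high-level plan is the standard one and correctly records the preliminary reductions that the paper also uses: conditioning so that $\Pr{v \notin V^S,\, v \notin V^I} = 0$ makes $V^S \cup V^I = V$ deterministic, stochastic similarity yields $\Pr{v\in V^S},\Pr{v\in V^I}\geq \tfrac1{c+1}$, independence holds across distinct nodes, and the discretization to $O(n^2)$ candidate intervals is exactly what the paper does. Where you diverge is the concentration step, and that divergence is not merely bookkeeping.

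The paper does \emph{not} concentrate $\lvert V^S[a,d]\rvert$ and $\lvert V^I[a,d]\rvert$ separately around their (unknown) means $\mu^S,\mu^I$. It exploits the conditioning further: with $A_v=\mathbf{1}[v\in V^I]$, $B_v=\mathbf{1}[v\in V^S]$ and $\Pr{A_v=0,B_v=0}=0$, the single indicator $X_v := A_v(1-B_v)=\mathbf{1}[v\in V^I\setminus V^S]$ already determines $B_v = 1-X_v$ pointwise, so $\sum A_v \geq (c+1)\sum B_v$ rewrites as a one-sided tail event on the \emph{single} sum $\sum_{v\in V[a,d]} X_v$ of independent Bernoullis with $\Pr{X_v=1}\leq \tfrac{c}{c+1}$. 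Only one Chernoff application is needed, and no ratio of two independently-fluctuating counts has to be controlled.

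Your two-sided version loses precisely there, and with the stated threshold it does not close. The ratio requirement $\tfrac{c(1+\epsilon)}{1-\epsilon}\leq c+1$ forces $\epsilon \leq \tfrac{1}{2c+1}$, while you can only assert $\mu^S,\mu^I\geq \tfrac{m}{c+1}$. The Chernoff upper-tail exponent is then at most $\tfrac{\epsilon^2 m}{3(c+1)} \leq \tfrac{8(c+1)}{(2c+1)^2}\ln n$, which for $c=1$ is about $1.78\ln n$ and tends to $0$ as $c$ grows — far short of the $4\ln n$ you need to survive the $O(n^2)$ union bound, so the claimed $n^{-4}$ per interval does not hold. Making your route work would require replacing $24(c+1)^2\ln n$ in the definition of ``balanced'' by something of order $(c+1)(2c+1)^2\ln n$, i.e.\ a strictly larger constant \emph{and} a higher power of $c$. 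That the specific constant $24(c+1)^2$ suffices is a consequence of the paper's one-shot reduction to the disjoint-witness indicators $X_v$, which your symmetric two-sided argument does not reproduce.
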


We show that for any $a \leq d$ with $\lvert (V^S \cup V^I)[a, d]\rvert \geq 24 (c+1)^2 \ln n$, we have 
\[
\Pr{\lvert V^I[a, d] \rvert > (c+1) \lvert V^S[a, d] \rvert} \leq \frac{1}{n^4} \enspace.
\]
This shows the claim because there are at most $\frac{n(n-1)}{2}$ different values for $a$ and $d$ that need to be considered and, furthermore, the bound on $\Pr{\lvert V^S[a, d] \rvert > (c+1) \lvert V^I[a, d] \rvert}$. The rest follows then by applying a union bound.

After simplifying notation, it suffices to show the following claim.

\begin{claim*}\label{claim:5-random-variables}
Let $A_1, \ldots, A_k, B_1, \ldots, B_k$ be $0$/$1$ random variables with the following properties
\begin{itemize}
  \item $A_1 (1 - B_1), \ldots, A_k (1 - B_k)$ are independent
  \item For all $i \in [k]$, we have $\Pr{A_i = 0, B_i = 0} = 0$, $\Pr{A_i = 1} \leq c \cdot \Pr{B_i = 1}$ for some $c \geq 1$.
\end{itemize}
Then we have
\[
\Pr{\sum_{i \in [k]} A_i \geq (c + 1) \sum_{i \in [k]} B_i} \leq \exp\left( - \frac{k}{6 (c+1)^2}\right)
\]
\end{claim*}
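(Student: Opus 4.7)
The plan is to introduce $X_i := A_i(1-B_i)$, exploit the two stated conditions to rewrite $\sum_i B_i$ deterministically in terms of $\sum_i X_i$, and then apply a Chernoff-type concentration inequality to a one-sided tail event on an independent Bernoulli sum.

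First I would observe that $\Pr{A_i = 0, B_i = 0} = 0$ forces $B_i = 1 - X_i$ almost surely: when $X_i = 1$, the defining product already gives $B_i = 0$; when $X_i = 0$, either $A_i = 0$ (which together with the first marginal condition forces $B_i = 1$) or $B_i = 1$ directly. Since the $X_i$ are independent Bernoullis, so are the $B_i$, and $\sum_i B_i = k - \sum_i X_i$.

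Next I would use the similarity bound. Since $\Pr{B_i = 1} = 1 - \Pr{X_i = 1}$, combining $\Pr{X_i = 1} \le \Pr{A_i = 1}$ with $\Pr{A_i = 1} \le c\,\Pr{B_i = 1}$ gives $\Pr{X_i = 1} \le c/(c+1)$, so $\Ex{\sum_i X_i} \le ck/(c+1)$. The event $\sum_i A_i \ge (c+1)\sum_i B_i$, together with the trivial $\sum_i A_i \le k$, then yields $\sum_i B_i \le k/(c+1)$, equivalently $\sum_i X_i \ge ck/(c+1)$; this is a one-sided tail on a sum of independent Bernoulli variables, to which I would finally apply a Chernoff-type inequality.

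The main obstacle is the last step: the threshold $ck/(c+1)$ coincides with the maximum possible mean of $\sum_i X_i$, so a plain Chernoff estimate gives nothing in the extremal regime where each $\Pr{X_i = 1}$ saturates $c/(c+1)$. The resolution is that in that regime the similarity constraint also forces $\Pr{A_i = 1, B_i = 1} = 0$, so $A_i = X_i$ deterministically and the event tightens to $\sum_i X_i \ge (c+1)k/(c+2)$, which lies above the mean by a margin of $k/((c+1)(c+2))$ and therefore does admit an exponential tail bound of the desired order; a short case split on whether $\sum_i \Pr{X_i = 1}$ is close to or bounded away from $ck/(c+1)$ interpolates between the standard Chernoff estimate (when the mean is safely below the threshold) and this extremal analysis, yielding the claimed bound.
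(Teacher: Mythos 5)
Your decomposition closely matches the paper's own proof: the paper sets $X_i = A_i(1-B_i)$ and $Y_i = A_i B_i$, deduces $B_i = 1-X_i$ from $\Pr{A_i=0,B_i=0}=0$, shows $\Pr{X_i=1}\le \frac{c}{c+1}$, and reduces the event to an upper tail for $\sum_i X_i$. Your reduction to the threshold $\frac{c}{c+1}k$ (via $\sum_i A_i\le k$) is in fact slightly tighter than the paper's $\frac{c}{c+2}k$ (obtained by discarding $Y_i$ with $\sum_i Y_i\le k$). You also correctly identify the central problem: this threshold equals the worst-case bound $\frac{c}{c+1}k$ on $\Ex{\sum_i X_i}$, so a plain Chernoff estimate gives nothing. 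The paper's own proof does not actually escape this --- its final display applies an ``upper-tail Chernoff bound'' at the point $\left(\frac{c}{c+1}\right)^2 k$, which lies \emph{below} the mean bound $\frac{c}{c+1}k$, and the multiplier $\left(1-\frac{1}{c+1}\right)$ it inserts is not of the form $(1+\delta)$ for any $\delta>0$; that step, as written, is not a valid Chernoff application.

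The trouble is that your proposed repair cannot rescue the stated constant, because the claim is in fact false for $c\ge 2$. Take precisely your extremal case: $A_i = X_i$ i.i.d.\ Bernoulli$\left(\frac{c}{c+1}\right)$ and $B_i = 1-X_i$ (so $Y_i\equiv 0$); all hypotheses hold with equality, the event is $\sum_i X_i \ge \frac{c+1}{c+2}k$, and the large-deviation exponent is the relative entropy $D\left(\frac{c+1}{c+2}\,\big\|\,\frac{c}{c+1}\right)$. For $c=2$ this equals $D(3/4\,\|\,2/3)\approx 0.0164$, whereas $\frac{1}{6(c+1)^2}=\frac{1}{54}\approx 0.0185$; since the true tail is $\Omega(k^{-1/2})\exp(-0.0164\,k)$, it exceeds $\exp(-k/54)$ once $k$ is large. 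A Taylor expansion shows $D\approx\frac{1}{2c(c+2)^2}=\Theta(c^{-3})$, while the claim demands $\Theta(c^{-2})$, so this is not a matter of tuning constants. The ``interpolation'' you sketch also has its own gap: the $Y_i=A_iB_i$ are not assumed independent in the claim's hypotheses, so Markov on $\sum_i Y_i$ yields only polynomial, not exponential, decay in the intermediate regime. The bound should be stated as $\exp\!\left(-\Omega\!\left(k/(c+1)^3\right)\right)$, which still suffices for Lemma~\ref{lemma:balanced} since $c$ is a constant there, but its proof needs an MGF argument that carries $\sum_i Y_i$ along and uses full independence of the pairs $(A_i,B_i)$ --- a hypothesis the application provides but the claim, as stated, does not assume.
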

\begin{proof}
Let $X_i = 1$ iff $A_i = 1$ and $B_i = 0$; let $Y_i = 1$ iff $A_i =1$ and $B_i = 1$. By this definition, we have $A_i = X_i + Y_i$ and $B_i = 1 - X_i$. Furthermore $\sum_{i \in [k]} A_i \geq (c + 1) \sum_{i \in [k]} B_i$ is equivalent to
\[
\sum_{i \in [k]} (X_i + Y_i) \geq (c + 1) \sum_{i \in [k]} (1 - X_i) = (c + 1) k - (c + 1) \sum_{i \in [k]} X_i \enspace.
\]
This is equivalent to
\[
(c + 2) \sum_{i \in [k]} X_i + \sum_{i \in [k]} Y_i \geq (c + 1) k \enspace.
\]
For the probability, this means
\begin{align*}
\Pr{\sum_{i \in [k]} A_i \geq (c + 1) \sum_{i \in [k]} B_i} & = \Pr{(c + 2) \sum_{i \in [k]} X_i + \sum_{i \in [k]} Y_i \geq (c + 1) k} \\
& \leq \Pr{\sum_{i \in [k]} X_i \geq \frac{c}{c + 2} k} \enspace.
\end{align*}
As $\Pr{A_i = 1} \leq c \cdot \Pr{B_i = 1}$, we have $\Pr{X_i = 1} = \Pr{A_i = 1, B_i = 0} \leq \Pr{A_i = 1} \leq \frac{1}{c} \Pr{B_i = 1} =  \frac{1}{c} (1 - \Pr{B_i = 0}) \leq \frac{1}{c} (1 - \Pr{A_i = 1, B_i = 0}) = \frac{1}{c} (1 - \Pr{X_i = 1})$. Therefore, we need to have $\Pr{X_i = 1} \leq \frac{c}{c+1}$. Using a Chernoff bound,
\begin{align*}
\Pr{\sum_{i \in [k]} X_i \geq \left( \frac{c}{c+1} \right)^2 k} & = \Pr{\sum_{i \in [k]} X_i \geq \left( 1 - \frac{1}{c+1} \right) \left( \frac{c}{c+1} \right) k} \\
& \leq \exp\left( - \frac{\left(\frac{1}{c+1} \right)^2 \left( \frac{c}{c+1} \right) k}{3}\right) \enspace.
\end{align*}
As $\left(\frac{c}{c+1}\right)^2 \leq \frac{c}{c+1} \leq \frac{c}{c+2}$ and $\left( \frac{1}{c+1} \right)^2 \left( \frac{c}{c+1} \right) \geq \frac{1}{2 (c+1)^2}$ for $c \geq 1$, this yields
\[
\Pr{\sum_{i \in [k]} X_i \geq \frac{c}{c+2} k} \leq \exp\left( - \frac{k}{6 (c+1)^2}\right) \enspace,
\]
which shows the claim.
\end{proof}

Having established this property, the algorithm behaves nicely in the sense that $x_{\text{med}}$ is always a reasonable point to split.

\begin{claim*}\label{claim:5-balanced}
If $V^S$ and $V^I$ are balanced, for any execution of \textsc{Split}($a$, $d$, $k$), we have $\lvert (V^S \cup V^I)[a, d]\rvert \leq k$.
\end{claim*}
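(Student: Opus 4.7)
I would prove a strengthened invariant by induction on the recursion depth: every invocation \textsc{Split}$(a,d,k)$ satisfies $(c+2)\lvert V^S[a,d]\rvert \le k$. The original claim then follows by one application of the balanced assumption. The base case is immediate: the initial call has $a=-\infty$, $d=\infty$, and $k=\tilde n$, so $(c+2)\lvert V^S[-\infty,\infty]\rvert=(c+2)\lvert V^S\rvert=\tilde n=k$.

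For the inductive step, assume the invariant at a call with $k\ge\delta$, so the algorithm splits at $x_{\mathrm{med}}$, the median of arrival times in $V^S[a,d]$. Writing $m=\lvert V^S[a,d]\rvert$, at most $\lceil m/2\rceil$ nodes of $V^S[a,d]$ have arrival time at most $x_{\mathrm{med}}$. Since every $v\in V^S[a,x_{\mathrm{med}}]$ satisfies $\arrival(v)\le\departure(v)\le x_{\mathrm{med}}$, we get $\lvert V^S[a,x_{\mathrm{med}}]\rvert\le\lceil m/2\rceil$, and the analogous bound holds on the right. Multiplying by $(c+2)$ and invoking the hypothesis yields $(c+2)\lvert V^S[a,x_{\mathrm{med}}]\rvert\le k/2+(c+2)/2$. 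Every recursive call has $k\ge\chi\delta=24(c+1)^2\ln\tilde n$, which dominates $(c+2)^2/(c+1)$, so the additive $(c+2)/2$ term is absorbed into the slack $\bigl(\chi-\tfrac12\bigr)k=\tfrac{c+1}{2(c+2)}k$. Thus the invariant is preserved with bound $\chi k$ in both recursive calls.

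To derive the claim from the invariant, fix any call \textsc{Split}$(a,d,k)$. If $\lvert(V^S\cup V^I)[a,d]\rvert\ge 24(c+1)^2\ln n$, the balanced assumption gives $\lvert V^I[a,d]\rvert\le(c+1)\lvert V^S[a,d]\rvert$, hence $\lvert(V^S\cup V^I)[a,d]\rvert\le(c+2)\lvert V^S[a,d]\rvert\le k$. Otherwise $\lvert(V^S\cup V^I)[a,d]\rvert<24(c+1)^2\ln n\le k$ directly, using the lower bound $k\ge\chi\delta=24(c+1)^2\ln\tilde n\ge 24(c+1)^2\ln n$; the last inequality uses $\tilde n\ge n$, which itself follows from balanced applied to the whole universe.

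The main obstacle is the choice of the right quantity to induct on. Inducting directly on $\lvert(V^S\cup V^I)[a,d]\rvert \le k$ fails because the median of arrivals only controls the $V^S$ side, and converting to $V^S\cup V^I$ via balanced costs a factor $(c+2)$ per level, which the shrinkage $\chi=1-\frac{1}{2(c+2)}$ is far too weak to absorb. Inducting on $(c+2)\lvert V^S[a,d]\rvert$ keeps the blow-up factor outside the halving step, so the sharp $\tfrac12$-decay from the median comfortably beats $\chi$, and the conversion to $V^S\cup V^I$ is paid only once at the end.
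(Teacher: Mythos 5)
Your proof is correct, and it takes a genuinely different route from the paper. The paper \emph{does} induct directly on $\lvert (V^S \cup V^I)[a,d]\rvert \le k$, contrary to your parenthetical dismissal of that approach. The reason it works is not the naive ``multiply by $\tfrac12$, then blow up by $(c+2)$'' chain you worry about, but an additive argument: the paper observes that the nodes of $V^S[a',d']$ that depart after $x_{\text{med}}$ are a subset of $(V^S\cup V^I)[a',d']\setminus(V^S\cup V^I)[a',x_{\text{med}}]$, and the median property plus balancedness give at least $\tfrac{1}{2(c+2)}\lvert(V^S\cup V^I)[a',d']\rvert$ such nodes; the subtraction then yields a multiplicative shrinkage of exactly $\chi$ per level, with no per-level conversion cost at all. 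So the ``factor $(c+2)$ per level'' obstacle you name does not arise in the paper's version.

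Your alternative, inducting on the strengthened invariant $(c+2)\lvert V^S[a,d]\rvert\le k$, sidesteps the need for this subtraction trick and is in some respects cleaner: the induction never touches $V^I$ and never invokes balancedness, the halving is sharp and comfortably beats $\chi$, and the $(c+2)$ conversion is paid exactly once when translating the invariant into the stated claim. The cost is an additive $\lceil m/2\rceil$ rounding term, which you correctly observe is absorbed because $k\ge\chi\delta=24(c+1)^2\ln\tilde n\ge(c+2)^2/(c+1)$; also, the final translation step requires a case split on whether $\lvert(V^S\cup V^I)[a,d]\rvert$ is above or below the $24(c+1)^2\ln n$ threshold, and the below-threshold case needs $\tilde n\ge n$, which you correctly derive from balancedness of the whole universe. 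Both the paper and your argument implicitly assume $n$ large enough that balancedness is not vacuous at the root; that shared corner case is harmless. One small wrinkle worth noting in your own write-up: ``at most $\lceil m/2\rceil$ nodes have arrival time at most $x_{\text{med}}$'' is only true modulo tie-breaking at the median (the paper has the symmetric issue with its lower bound on departures), so a consistent tie-breaking convention should be stated.
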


\begin{proof}
We prove this claim by induction. For the very first execution \textsc{Split}~($a$, $d$, $\tilde{n}$), the claim holds because if $V^S$ and $V^I$ are balanced, then $\lvert V^S \cup V^I \rvert \leq (c+2) \lvert V^S \rvert = \tilde{n}$.

Any other execution of \textsc{Split}~($a$, $d$, $k$) is a recursive call by some other execution \textsc{Split}~($a'$, $d'$, $k'$) for some $k' \geq \delta$, where $k = \chi k'$. W.l.o.g., consider the execution on the ``left'' subinstance, in which $a$ is set to $a'$ and $d$ is set to $x_{\text{med}}$, the median of all arrival times on $V^S[a', d']$. Observe that if $\lvert (V^S \cup V^I)[a', d'] \rvert <\chi \delta \leq \chi k' = k$, we are immediately done. So, we can assume that $\lvert (V^S \cup V^I)[a', d'] \rvert \geq \chi \delta$. Directly from the definition, we get
\[
\lvert (V^S \cup V^I)[a', x_{\text{med}}] \rvert \quad \leq \quad \lvert (V^S \cup V^I)[a', d'] \rvert - \lvert V^S[a', d'] \setminus V^S[a', x_{\text{med}}] \rvert \enspace.
\]

As $\lvert (V^S \cup V^I)[a', d'] \rvert \geq \chi \delta$ and $V^S$ and $V^I$ are balanced, we have that $\lvert V^S[a', d'] \rvert \geq \frac{1}{c+2} \lvert (V^S \cup V^I) [a', d'] \rvert$. Combining this fact with the median property implies
\[
\lvert V^S[a', d'] \setminus V^S[a', x_{\text{med}}] \rvert  \quad \geq \quad \frac{1}{2} \lvert V^S[a', d'] \rvert \quad \geq \quad \frac{1}{2(c+2)} \lvert (V^S \cup V^I)[a, d] \rvert \enspace.
\]
In combination, $\lvert (V^S \cup V^I)[a', x_{\text{med}}] \rvert \leq \chi \lvert (V^S \cup V^I)[a, d] \rvert \leq \chi k' = k$.
\end{proof}

Using these two technical claims, we can now proceed to the proof of Theorem~\ref{theorem:arrivals}. In fact, we show the following stronger claim which immediately yields the desired result.

\begin{claim*}\label{claim:5-split} For any fixed $a < d$ and $k$ such that $\lvert (V^S \cup V^I)[a,d] \rvert \leq k$, we have
\[
\Ex{ \textsc{Split}(a, d, k) \growingmid \mathcal{E} } \quad \geq \quad \frac{1}{\alpha \log(k) + \beta} \cdot \Ex{\OPT(w^I)[a,d]}\enspace,
\]
where $\alpha = \frac{2 \gamma}{- \log \left( \chi \right)}$, and $\beta = 2 \e \delta$.
\end{claim*}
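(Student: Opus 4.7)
I would prove the claim by induction on $k$, bounding the algorithm's expected output weight against the weight $w^I(S^*)$ of an arbitrary fixed optimum independent set $S^*\subseteq V^I[a,d]$ and then taking expectations over the randomness of $w^I$.

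For the base case $k<\delta$, the algorithm enters the \textsc{Else} branch and independently samples each $v\in V^I[a,d]$ with probability $1/\delta$; the output keeps only sampled nodes with no earlier-arriving sampled conflicting neighbor. For any $v\in S^*$, the at most $|V^I[a,d]|-1\le\delta-1$ nodes that can conflict with $v$ all lie outside $S^*$, so the probability that $v$ is sampled while none of its potential conflicting neighbors is sampled is at least $\tfrac{1}{\delta}(1-\tfrac{1}{\delta})^{\delta-1}\ge\tfrac{1}{e\delta}\ge\tfrac{1}{\beta}$, using $\beta=2e\delta$. Summing over $v\in S^*$ yields $\Ex{\textsc{Split}(a,d,k)\growingmid\mathcal{E}}\ge\tfrac{1}{\beta}w^I(S^*)\ge\tfrac{1}{\alpha\log(k)+\beta}w^I(S^*)$.

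For the inductive step $k\ge\delta$, I would partition $S^*$ into the nodes $S^*_m$ straddling $x_{\text{med}}$ (i.e.\ $a\le\arrival(v)\le x_{\text{med}}<\departure(v)\le d$), the nodes $S^*_l$ with $\departure(v)\le x_{\text{med}}$, and the nodes $S^*_r$ with $\arrival(v)>x_{\text{med}}$. On the ``middle'' branch (probability $q$), algorithm $\mathcal{A}$ is $\gamma$-competitive on an instance in which $S^*_m$ is feasible, contributing at least $\tfrac{q}{\gamma}w^I(S^*_m)$. On the ``split'' branch (probability $1-q$), Claim~\ref{claim:5-balanced} guarantees $|(V^S\cup V^I)[a,x_{\text{med}}]|\le\chi k$ and $|(V^S\cup V^I)[x_{\text{med}},d]|\le\chi k$, so the inductive hypothesis applies to both recursive calls; their outputs are compatible since they live on time-disjoint intervals. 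Combining both branches and plugging in $q=\tfrac{2\gamma}{\alpha\log(k)+\beta}$ gives
\[
\Ex{\textsc{Split}(a,d,k)\growingmid\mathcal{E}}\ \ge\ \frac{2\,w^I(S^*_m)}{\alpha\log(k)+\beta}\ +\ (1-q)\,\frac{w^I(S^*_l)+w^I(S^*_r)}{\alpha\log(\chi k)+\beta}.
\]
The key algebraic check is $(1-q)(\alpha\log(k)+\beta)\ge\alpha\log(\chi k)+\beta$, which after cancellation reduces to $-\alpha\log\chi\ge 2\gamma$, holding with equality by the very choice $\alpha=\tfrac{2\gamma}{-\log\chi}$. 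Together with the trivial bound $\tfrac{2}{\alpha\log(k)+\beta}\ge\tfrac{1}{\alpha\log(k)+\beta}$ on the middle term, this yields the desired inequality for the partition $S^*_m\cup S^*_l\cup S^*_r=S^*$.

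\textbf{Main obstacle.} The principal technical difficulty is keeping the conditioning on $\mathcal{E}$ (the balancedness event of Lemma~\ref{lemma:balanced}) consistent throughout the induction: one needs $x_{\text{med}}$ to be well-defined at every recursive call (which follows from balancedness once $(V^S\cup V^I)[a,d]$ is large enough, and is handled at small intervals by the base case), one needs $\mathcal{A}$'s competitive guarantee and the inductive hypothesis to carry through conditionally on $\mathcal{E}$ (since $\mathcal{E}$ depends only on $V^S,V^I$ and not on the internal random bits of $\mathcal{A}$ or \textsc{Split}'s sampling), and one needs $q\le 1$ at every recursive call, which follows from $\beta\ge 2\gamma$. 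A secondary subtlety is feasibility of the base-case output, which is ensured by the conflict-resolution interpretation above.
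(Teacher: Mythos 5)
Your proposal is correct and takes essentially the same route as the paper: induction on $k$, a base case in which each node of the optimum is accepted with probability at least $(1-1/\delta)^{\delta-1}\tfrac1\delta\geq\tfrac1{e\delta}$, a partition of the optimal set into the straddling, left, and right pieces, and the same algebraic cancellation showing $(1-q)(\alpha\log k+\beta)\geq\alpha\log(\chi k)+\beta$ reduces to $-\alpha\log\chi\geq 2\gamma$, which holds with equality by the choice of $\alpha$. The one place where you spend your slack differently from the paper is the factor two inside $\beta=2e\delta$: you use it as pure headroom in the chain $\tfrac1{e\delta}\geq\tfrac1\beta$, whereas the paper reserves it precisely to absorb the passage between the $\mathcal{E}$-conditional quantities that naturally appear in the induction and the unconditional $\Ex{\OPT(w^I)[a,d]}$ that sits on the right-hand side of the claim (and, in the inductive step, to halve $\mathcal{A}$'s unconditional guarantee before conditioning on $\mathcal{E}$). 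You flag this conditioning issue as your main obstacle, which is the right thing to notice; a complete write-up should make that factor-two accounting explicit rather than leaving the factor idle.
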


\begin{proof}
We show this claim by induction on $k$.

The base case is $k < \delta$. In this case, we randomly choose a request to accept. For each request, the probability of being accepted is at least $\left( 1 - \frac{1}{\delta} \right)^{\delta - 1} \frac{1}{\delta} \geq \frac{1}{\e \delta}$. This yields
\[
\Ex{\textsc{Split}(a, d, k) \growingmid \mathcal{E}} \quad \geq \quad \frac{1}{\e \delta} \cdot \Ex{ \OPT(w^I)[a, d] \growingmid \mathcal{E} } \enspace.
\]
Furthermore, we have $\Ex{\OPT(w^I)[a, d] \growingmid \mathcal{E}} \geq \frac{1}{2(c+1)} \OPT(w^I)[a, d]$. This yields 
\begin{align*}
\Ex{\textsc{Split}(a, d, k) \growingmid \mathcal{E}} \quad &\geq \quad \frac{1}{\e \delta} \cdot \Ex{\OPT(w^I)[a, d] \growingmid \mathcal{E}}\\
&\geq \quad \frac{1}{2 \e \delta} \cdot \Ex{\OPT(w^I)[a, d]}\\
&\geq \quad \frac{1}{\alpha \log(k) + \beta} \cdot \Ex{\OPT(w^I)[a, d]}\enspace.
\end{align*}

For the induction step, let us consider the case that $k \geq \delta$. By induction hypothesis,
\[
\Ex{ \textsc{Split}(a, x, \chi k) \growingmid \mathcal{E} } \geq  \frac{1}{\alpha \log\left(\chi k \right) + \beta} \cdot \Ex{\OPT(w^I)[a, x_{\text{med}}]}
\]
and
\[
\Ex{ \textsc{Split}(x, d, \chi k) \growingmid \mathcal{E} } \geq  \frac{1}{\alpha \log\left(\chi k\right) + \beta} \cdot \Ex{\OPT(w^I)[x_{\text{med}}, d]} \enspace.
\]
Furthermore, by assumption on the algorithm,
\[
\Ex{ \ALG(a, x, d) \growingmid \mathcal{E}} \quad \geq \quad \frac{1}{2} \Ex{ \ALG(a, x, d) } \quad \geq \quad \frac{\Ex{\OPT(w^I)[a, x_{\text{med}}, d]}}{2 \gamma} \enspace.
\]
In combination, we get for the expectation of the value
\begin{align*}
& \Ex{\textsc{Split}(a, d, k) \growingmid \mathcal{E}} \\
& \geq \frac{2 \gamma}{\alpha \log(k) + \beta} \cdot \frac{\Ex{\OPT(w^I)[a, x_{\text{med}}, d]}}{2 \gamma} \\
& \quad + \frac{\alpha \log\left(k\right) + \beta - 2 \gamma}{\alpha \log\left(k\right) + \beta} \cdot \frac{1}{\alpha \log\left(\chi k\right) + \beta} \cdot \left( \Ex{\OPT(w^I)[a, x_{\text{med}}]} + \Ex{\OPT(w^I)[x_{\text{med}}, d]} \right) \enspace.
\end{align*}
By definition of $\alpha$, we have $\alpha \log\chi \leq - 2\gamma$. Therefore, we get
\begin{align*}
& \Ex{\textsc{Split}(a, d, k) \growingmid \mathcal{E} } \\
& \geq \frac{1}{\alpha \log(k) + \beta} \cdot \Ex{ \OPT(w^I)[a, x_{\text{med}}, d] + \OPT(w^I)[a, x_{\text{med}}] + \OPT(w^I)[x_{\text{med}}, d]) \growingmid \mathcal{E} } \enspace.
\end{align*}
This shows the claim because for any $x_{\text{med}} \in [a, d]$ and any weights $w$ we have
\[
\OPT(w^I)[a, d] \quad \leq \quad \OPT(w^I)[a, x_{\text{med}}, d] + \OPT(w^I)[a, x_{\text{med}}] + \OPT(w^I)[x_{\text{med}}, d] \enspace.
\]
\end{proof}\label{sec:arrivals}
\section{Edge-Weighted Conflict Graphs}
To capture more realistic wireless interference models such as for example the ones based on the signal-to-interference-plus-noise ratio (SINR), we now extend our study to edge-weighted conflict graphs, following the approach in \cite{Hoefer2011}. We assume that between any pair of nodes $u, v \in V$, there exists a (directed) weight $w(u, v) \in [0, 1]$.  It will be convenient to use undirected weights $\bar{w}(u, v) = w(u, v) + w(v, u)$ and setting $w(v, v)=0$ for all $v \in V$.

We define $S \subseteq V$ to be an independent set if $\sum_{u \in S} w(u, v) < 1$ for all $v \in S$. The inductive independence number is now the smallest number $\rho$ for which there is an ordering $\prec$ such that for all independent sets $S$, we have $\sum_{u \in S, u \succ v} \bar{w}(u, v) \leq \rho$ for all $v \in V$.

The major challenge and main distinction over the case of unweighted conflict graphs is that conflicts become asymmetric. In unweighted graphs, node $u$ has a conflict with node $v$ if and only if node $v$ has a conflict with node $u$. In edge-weighted conflict graphs, there might be many nodes $u_1,u_2,\ldots$ that can feasibly be placed into the independent set when considering previously added nodes, but this might violate some other node $v$ that was added before. Our solution to this approach is an additional thinning step in the construction of set $M_3$, which allows to build the final set $M_4$ and lose only an additional polylogarithmic factor in the competitive ratio.

\begin{algorithm}
\label{alg-onlineedgeweighted}
\DontPrintSemicolon
$\tilde{n} := (c+2) \lvert V^S \rvert$\;
\ForAll{$v \in V^S$ in order of increasing $\prec$}{
\lIf{$\sum_{u \in M_1} \bar{w}(u, v) < 1$}{
add $v$ to $M_1$}
}
\ForAll{$v \in V^I$ in order of arrival}{
\lIf{$\sum_{u \in M_1, u \prec v} \bar{w}(u, v) < 1$}{add $v$ to $M_2$}\;
\lIf{$v \in M_2$}{w/prob $q$ add $v$ to $M_3$}\;
\lIf{$v \in M_3$ and there is no $u \in M_3$ with $\bar{w}(u, v) \geq \frac{1}{4 \e \log \tilde{n}}$}{add $u$ to $M_4$}
}
\caption{Online Max-IS in Edge-Weighted Graphs}
\end{algorithm}

Similar to our approach in the last section, we assume w.l.o.g.\ that for each $v \in V$ we have $\Pr{v \not\in V^S, v \not\in V^I} = 0$ and redefine $n = \lvert V^S \cup V^I \rvert$.

Note that the set $M_1$ is feasible with respect to only the preceding nodes in the $\prec$-ordering. However, given such a set, the contention resolution procedure in~\cite{Hoefer2011} can be used to decompose this set into $O(\log n)$ many feasible independent sets. 

Hence, the following observation is immediate.
\begin{observation}
The set $M_1$ can be decomposed to $2 \lceil \log n \rceil$ feasible solutions.
\end{observation}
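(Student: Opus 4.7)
The plan is to reduce the observation to the contention-resolution procedure of~\cite{Hoefer2011}, which says that any node set that is ``forward-feasible'' with respect to $\prec$ can be decomposed into $O(\log n)$ truly feasible independent sets. My proof proceeds in two steps.

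\textbf{Step 1 (forward-feasibility of $M_1$).} I would first verify that for every $v \in M_1$,
\[
\sum_{u \in M_1,\, u \prec v} \bar{w}(u,v) \;<\; 1.
\]
This is immediate from the construction in Algorithm~\ref{alg-onlineedgeweighted}: the greedy loop processes $V^S$ in $\prec$-order and only adds $v$ when the check $\sum_{u \in M_1} \bar{w}(u,v) < 1$ succeeds. At that moment, $M_1$ contains exactly those $\prec$-predecessors of $v$ that have already been selected, so the displayed inequality holds.

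\textbf{Step 2 (invoke the contention resolution).} I would then feed $M_1$ into the contention-resolution procedure of~\cite{Hoefer2011}. That procedure takes any forward-feasible set of size at most $n = |V^S \cup V^I|$ and, via a random coloring into $2 \lceil \log n \rceil$ classes combined with a Chernoff/union-bound analysis, shows that with positive probability every color class is a feasible independent set, i.e.\ satisfies $\sum_{u \in S_c,\, u \neq v} w(u,v) < 1$ for all $v \in S_c$. Existence of a good coloring with positive probability gives the claimed deterministic decomposition.

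\textbf{Main obstacle.} The nontrivial part hidden in Step 2 is bounding the \emph{backward} load $\sum_{u \succ v,\, u \in M_1 \cap S_c} w(u,v)$ inside each random color class $S_c$. Forward-feasibility immediately controls only the forward load $\sum_{u \prec v,\, u \in M_1 \cap S_c} w(u,v)$; for the backward direction one only has the pairwise estimate $w(u,v) \le \bar{w}(u,v) < 1$ for each $u \succ v$ in $M_1$ (inherited from the greedy check performed when $u$ was added) and the double-counting identity $\sum_{v \in M_1} \sum_{u \in M_1,\, u \succ v} \bar{w}(u,v) < |M_1|$, which only gives an \emph{average} backward load of at most $1$. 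The contribution of~\cite{Hoefer2011} that I am importing is precisely the concentration argument that, at the cost of the additional $\log n$ factor in the number of color classes, upgrades this average/pairwise control into a per-node in-class bound. With that concentration in hand, the observation follows directly.
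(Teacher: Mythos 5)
Your two-step plan matches what the paper does: the paper likewise notes that $M_1$ is feasible only with respect to $\prec$-predecessors (your Step~1, which is correct) and then simply cites the contention-resolution procedure of~\cite{Hoefer2011}. So at the level of "what is being invoked," you and the paper agree.

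However, your Step~2 and the "Main obstacle" paragraph misdescribe what that procedure does, and the mechanism you sketch would not work. You claim that a uniformly random coloring into $K = 2\lceil \log n \rceil$ classes, combined with a Chernoff/union bound, converts the \emph{average} backward-load bound $\sum_{v\in M_1}\sum_{u\in M_1,\,u\succ v}\bar w(u,v) < |M_1|$ into a per-node in-class bound. This is false: forward-feasibility of $M_1$ puts no per-node cap on the backward load. For instance, take $v_0 \prec v_1 \prec \dots \prec v_m$ with $w(v_i,v_0)=\tfrac12$ for all $i\ge 1$ and all other weights $0$; this set is forward-feasible (each $v_i$ sees $\bar w$-load $\tfrac12$ from its only predecessor $v_0$), yet $\sum_{u\succ v_0} w(u,v_0)=m/2$. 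A random $K$-coloring would give $v_0$ an expected in-class backward load of $\Theta(m/K)$, far above $1$, and no concentration inequality can push it below $1$. The actual argument in~\cite{Hoefer2011} is deterministic: it repeatedly peels off a feasible subset of the remaining nodes by a reverse-$\prec$ greedy pass and shows, via exactly the double-counting identity you wrote down, that each pass retains at least half of the remaining nodes, so $O(\log n)$ passes suffice. In the example above, the first pass would collect $\{v_1,\dots,v_m\}$ and a second pass would collect $\{v_0\}$. Since the paper's own proof is just the citation, your bottom line is correct, but the reconstruction of the cited lemma's inner workings is not, and if you had to supply that proof yourself your random-coloring route would fail.
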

The set $M_4$, which is the output set of the algorithm, is not ensured to be feasible either. However, we will show that with high probability it is feasible. Therefore, one can ensure feasibility without loss by adding an arbitrary conflict-resolution filter at the end.

Similar to our analysis of Algorithm~\ref{alg-onlinedisks}, our main concern will again be bounding the effects of the conflict resolution that is performed when deriving the set $M_4$ from $M_3$. For this purpose, we first obtain a high-probability bound on the sums of edge weights within set $M_2$.

\begin{lemma}\label{lemma:6-probability}
Let $\tau(n) := \frac{3(1+c)^3 + c}{c}\left( 2 \rho \lceil \log n \rceil + 1 \right)$. With probability at least $1 - \frac{1}{n}$, we have $\sum_{u \in M_2} \bar{w}(u, v) \leq \tau(n)$ for all $v \in M_2$.
\end{lemma}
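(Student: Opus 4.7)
The plan is to fix a node $v \in V$ and establish the per-node tail bound
\[
\Pr{v \in M_2 \text{ and } \sum_{u \in M_2,\, u \neq v} \bar{w}(u, v) > \tau(n)} \le \frac{1}{n^2}\enspace,
\]
from which the lemma follows by union-bounding over the at most $n$ nodes. Write $T_v = \sum_{u \in M_2,\, u \neq v} \bar w(u, v)$ and split it along $\prec$ as $T_v = T_v^\prec + T_v^\succ$.

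The first step is the matching deterministic bound for the $M_1$-sum, which explains the shape of $\tau(n)$. By the preceding Observation, $M_1$ decomposes into $2\lceil \log n\rceil$ feasible independent sets; applying the edge-weighted inductive independence property $\sum_{u \in S,\, u \succ v} \bar w(u, v) \leq \rho$ to each feasible piece $S$ gives $\sum_{u \in M_1,\, u \succ v} \bar{w}(u, v) \le 2\rho\lceil \log n \rceil$. Moreover, on the event $v \in M_2$ the admission test of Algorithm~\ref{alg-onlineedgeweighted} forces $\sum_{u \in M_1,\, u \prec v} \bar{w}(u, v) < 1$, so altogether $\sum_{u \in M_1,\, u \neq v} \bar{w}(u, v) < 2\rho\lceil \log n \rceil + 1$ whenever $v \in M_2$, which is precisely the quantity appearing inside $\tau(n)$.

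The second step lifts this deterministic $M_1$-bound to a high-probability $M_2$-bound by conditioning on the entire sample $V^S$. This freezes $M_1$ and the feasibility indicators used to build $M_2$; by stochastic independence across nodes, $\{X_u^I\}_{u \in V}$ remain mutually independent given $V^S$, and hence so are $\{X_u^2\}_{u \in V}$. Since $\bar w(u, v) \le 1$, $T_v$ given $V^S$ is a sum of independent $[0, 1]$-valued random variables, to which a Bernstein-type inequality applies. The conditional mean is controlled by stochastic similarity on a per-node basis: the $u$ with $X_u^S = 1$ contribute at most $\sum_{u \in M_1,\, u \neq v} \bar w(u, v)$ to the mean trivially, while the $u$ with $X_u^S = 0$ contribute in unconditional expectation at most $c \, \Ex{\sum_{u \in M_1} \bar w(u, v)}$ via $\Pr{X_u^I = 1} \le c \Pr{X_u^S = 1}$ and independence across nodes, exactly as in Lemma~\ref{lemma-c-equations}. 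Combining the Bernstein deviation (with increment bound $1$ and variance absorbed into the $(1+c)^3$ factor) with the two mean bounds and the Step-1 deterministic bound gives $T_v \le \tau(n)$ with conditional failure probability at most $1/n^3$; integrating over $V^S$ and union-bounding over $v \in V$ yields the lemma with overall failure probability at most $1/n$.

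The main technical obstacle is disentangling the conditioning on $v \in M_2$ from the coupling between the $\{X_u^2\}$'s, which are linked through the common set $M_1$. Conditioning on $v \in M_2$ biases the distribution of $V^S \cap \{t : t \prec v\}$ through the feasibility test for $v$, and this in turn perturbs the joint distribution of $(X_u^S, X_u^I)$ for $u \prec v$. The clean workaround, adopted above, is to condition on the full sample $V^S$ first: this makes $M_1$ and every feasibility indicator deterministic and reduces the remaining randomness in $T_v$ to the independent $\{X_u^I\}_u$'s, to which Bernstein's inequality can be applied directly without further conditioning tricks.
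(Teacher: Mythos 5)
Your Step 1 (the deterministic bound $\sum_{u \in M_1} \bar w(u, v) \le 2\rho\lceil\log n\rceil + 1$ on the event $v \in M_2$, via the Observation that $M_1$ decomposes into $2\lceil\log n\rceil$ feasible sets and the admission test) matches the paper. Step 2, however, has a genuine gap. You propose to condition on the entire sample $V^S$, freeze $M_1$, and then apply Bernstein to the remaining randomness in $\{X^I_u\}_u$. But the graph sampling model allows arbitrary correlation between $X^S_u$ and $X^I_u$ for the same node. In the secretary-model simulation, $(w^I(u), w^S(u))$ is drawn uniformly from $\{(0, w(u)), (w(u), 0)\}$, so $V^I = V \setminus V^S$ deterministically: conditioning on $V^S$ leaves no residual randomness at all. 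Then $T_v$ is a deterministic function of $V^S$, Bernstein provides no deviation bound, and you would still need to bound $\Ex{T_v \mid V^S}$ -- which is now $T_v$ itself. Moreover, even when $X^I_u$ and $X^S_u$ are independent (the prophet case), the stochastic-similarity inequality $\Pr{X^I_u = 1} \le c\,\Pr{X^S_u = 1}$ is a statement about unconditional marginals; it does not bound $\Ex{T_v \mid V^S}$ in terms of the \emph{realized} quantity $\sum_{u \in M_1} \bar w(u, v)$. A feasible node $u$ with $X^S_u = 0$ contributes nothing to the realized $M_1$-sum yet contributes $\bar w(u,v)\Pr{X^I_u = 1}$ to the conditional mean, and there is no pointwise inequality relating the two.

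The paper avoids both problems with a different encoding. Assuming WLOG that $V = V^S \cup V^I$, it introduces a single Bernoulli $X_i = \mathbb{1}[u_i \in V^I \setminus V^S]$ per node, whose \emph{unconditional} bias is $\Pr{X_i = 1} \le \frac{c}{1+c}$, together with predictable weights $w_i(X_1,\ldots,X_{i-1})$ encoding feasibility of $u_i$. Under this encoding, $\sum_{u \in M_2 \setminus M_1}\bar w(u,v) = \sum_i w_i X_i$ and $\sum_{u \in M_1}\bar w(u,v) = \sum_i w_i(1-X_i)$. The target is then the probability of the \emph{joint} event ``$\sum_i w_i X_i$ large AND $\sum_i w_i (1-X_i)$ small'', which is exactly what the martingale-type Chernoff bound in Lemma~\ref{lemma:piggybankprobability} controls. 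Crucially, the $M_1$-bound appears as part of the event being bounded, not as a conditioning that could bias the distribution of the remaining variables. Your ``condition, then concentrate'' strategy cannot reproduce this, and without a substitute for Lemma~\ref{lemma:piggybankprobability} (or some other way to handle the coupling between the $M_1$-sum and the $M_2$-sum without conditioning away the randomness) the proof does not go through.
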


\begin{proof}
Similar to our analysis in the previous section, w.l.o.g., we assume that for any $v \in V$ the probability that $v$ is neither in $V^S$ nor in $V^I$ is $0$.
 
Let us consider an arbitrary $v \in V$. We show that only with probability at most $\frac{1}{n^2}$, both $v \in M_2$ and $\sum_{u \in M_2} \bar{w}(u, v) > \tau(n)$. If $v \in M_2$, then we have $\sum_{u \in M_1, u \succ v} \bar{w}(u, v) \leq 2 \rho \lceil \log n \rceil$ and $\sum_{u \in M_1, u \prec v} \bar{w}(u, v) < 1$. That is, for $v \in M_2$, the necessary condition $\sum_{u \in M_1} \bar{w}(u, v) \leq 2 \rho \lceil \log n \rceil + 1$ has to be fulfilled. In the following, we fix the outcome whether $v \in V^S$ and/or $v \in V^I$.
 
Let $V \setminus \{v\} = \{u_1, \ldots, u_{n-1}\}$, where $u_1 \prec u_2 \prec \ldots \prec u_{n-1}$. Let $X_i = 1$ if $u_i \in V^I$ but $u_i \not\in V^S$, $0$ otherwise. Note that for each node $u_i$, the fact whether $u_i \in M_1 \cup M_2$ is determined by the random variables $X_1$, \ldots, $X_{i-1}$. The decision whether $u_i$ is added to $M_1$ or $M_2$ is then determined by $X_i$. Furthermore, as for each $u_i$ the probabilities of being contained in $V^S$ or in $V^I$ may differ by at most a factor of $c$, we get $\Ex{X_i} \leq \frac{c}{1+c}$.
 
For each $i$, we define a weight as follows. The weight $w_i$ is a function of $X_1, \ldots, X_{i-1}$: We set $w_i(X_1, \ldots, X_{i-1})$ to $\bar{w}(u_i, v)$ if the outcomes $X_1$, \ldots, $X_{i-1}$ yield $u_i \in M_1 \cup M_2$. Otherwise, we set $w_i(X_1, \ldots, X_{i-1})$ to $0$. This way, the equations $\sum_{u \in M_2 \setminus M_1} \bar{w}(u, v) = \sum_{i=1}^{n-1} w_i(X_1, \ldots, X_{i-1})  X_i$ and $\sum_{u \in M_1} \bar{w}(u, v) =$ $\sum_{i=1}^{n-1} w_i(X_1, \ldots, X_{i-1})  (1 - X_i)$ hold. As the random variables $X_1, \ldots, X_{n-1}$ are independent, we can apply Lemma~\ref{lemma:piggybankprobability}, whose proof can be found below.
 
\begin{lemma}
\label{lemma:piggybankprobability}
Let $X_1$, $X_2$, \ldots, $X_n$ independent Bernoulli trials such that $\Ex{X_i} \leq r$ for all $i \in [n]$. For each $i \in [n]$, let $w_i\colon \{0, 1\}^{i-1} \to [0, 1]$ be an arbitrary function. Let $B \geq 1$. Then we have
\[
\Pr{\sum_{i=1}^n w_i(X_1, \ldots, X_{i-1}) X_i \geq \left( \frac{3}{(1-r)^2 r} \right) B, \sum_{i=1}^n w_i(X_1, \ldots, X_{i-1}) (1 - X_i) \leq B} \leq \exp(-B) \enspace .\]
\end{lemma}
 
Setting $r = \frac{c}{1+c}$ and $B = 2 \rho \lceil \log n \rceil + 1$, we get
\begin{align*}
& \Pr{\sum_{u \in M_1 \cup M_2} \bar{w}(u, v) > \frac{3(1+c)^3 + c}{c}\left( 2 \rho \lceil \log n \rceil + 1 \right), \sum_{u \in M_1} \bar{w}(u, v) \leq 2 \rho \lceil \log n \rceil + 1} \\
& \leq \Pr{\sum_{u \in M_2 \setminus M_1} \bar{w}(u, v) > \frac{3}{\left( 1 - \frac{c}{1+c} \right)^2 \frac{c}{1+c} }\left( 2 \rho \lceil \log n \rceil + 1 \right), \sum_{u \in M_1} \bar{w}(u, v) \leq 2 \rho \lceil \log n \rceil + 1} \\
& \leq \exp\left(- 2 \rho \lceil \log n \rceil + 1 \right) \leq \frac{1}{n^2} \enspace.
\end{align*}
Applying a union bound now yields the claim.
\end{proof}

The low edge weight within $M_2$ is the key step to show that the additional thinning step towards $M_3$ and $M_4$ allows to obtain a feasible independent set with high probability.

\begin{theorem}
\label{theorem:feasibility}
For $q \leq \frac{1}{2\e \tau(\tilde{n})}$, the set $M_4$ is a feasible independent set with probability at least $1 - \frac{3}{n}$.
\end{theorem}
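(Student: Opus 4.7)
The plan is to show that with probability at least $1-3/n$ the set $M_4$ is feasible by bounding, for each potential $v \in M_4$, the quantity $\sum_{u \in M_4 \setminus \{v\}} w(u,v)$. The key observation is structural: if $v \in M_4$, then by the acceptance rule of Algorithm~\ref{alg-onlineedgeweighted} every $u \in M_4 \setminus \{v\}$ lies in $M_3$ and satisfies $\bar{w}(u,v) < \frac{1}{4\e \log \tilde{n}}$ (otherwise $u$ would block $v$ from joining $M_4$). Consequently, it suffices to prove that, uniformly over $v \in V^S \cup V^I$,
\[
Z_v \;:=\; \sum_{\substack{u \in M_3 \setminus \{v\} \\ \bar{w}(u,v) < 1/(4\e \log \tilde{n})}} \bar{w}(u,v) \;<\; 1.
\]

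I would control this by combining three high-probability events. First, by Lemma~\ref{lemma:6-probability}, with probability at least $1-1/n$ we have $\sum_{u \in M_2} \bar{w}(u,v) \le \tau(n)$ for every $v \in M_2$. Second, because each node in $V^S \cup V^I$ lies in $V^S$ with probability at least $\frac{1}{c+1}$ independently, a Chernoff bound on $|V^S|$ yields $\tilde{n} = (c+2)|V^S| \ge n$, and thus $\tau(n) \le \tau(\tilde{n})$, except on an event of probability at most $1/n$. Condition on the intersection of these two events. Third, I use the fact that $M_3$ is obtained from $M_2$ by independent $q$-thinning, so conditional on $M_2$ (and on $v \in M_2$, since otherwise $v \notin M_4$ is automatic), $Z_v$ is a sum of independent bounded random variables with contributions in $[0, \frac{1}{4\e \log \tilde{n}}]$ and expectation at most $q \cdot \tau(\tilde{n}) = \frac{1}{2\e}$.

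A standard bounded-contribution Chernoff inequality (namely $\Pr{X \ge t} \le 2^{-t/M}$ whenever $t \ge 2\e \Ex{X}$ and each summand lies in $[0,M]$) then gives
\[
\Pr{Z_v \ge 1 \;\big|\; M_2,\ v \in M_2} \;\le\; 2^{-4\e \log \tilde{n}},
\]
which is much smaller than $1/n^2$. A union bound over all at most $n$ candidates $v$ shows that, on the two conditioning events, the probability that $Z_v \ge 1$ for some $v$ is at most $1/n$. Taking a final union bound over the three failure events yields total failure probability at most $3/n$, as claimed.

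The main obstacle is handling the conditioning cleanly: one must ensure that, given $M_2$ (and the weight-sum bound from Lemma~\ref{lemma:6-probability}), the indicators $\mathbb{1}[u \in M_3]$ for $u \in M_2$ really are independent Bernoulli$(q)$ random variables, so that Chernoff applies. This follows because the thinning step in the algorithm is performed by an independent coin flip per node at arrival. The remainder is routine bookkeeping to translate between $\tau(n)$ and $\tau(\tilde{n})$ and to verify that the numerical constants in the Chernoff bound line up with the choice $q \le \frac{1}{2\e \tau(\tilde{n})}$.
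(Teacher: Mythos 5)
Your proof takes essentially the same route as the paper: reduce to bounding a truncated weight sum (you work directly with the cap $\tfrac{1}{4\e\log\tilde n}$, the paper introduces reduced weights $w'(u,v)=\min\{\tfrac{1}{4\e\log n},\bar w(u,v)\}$ and uses $\tilde n\ge n$ to argue this suffices), invoke Lemma~\ref{lemma:6-probability} plus the high-probability event $\tilde n\ge n$ to get a conditional expectation bound of $\tfrac{1}{2\e}$ for the $q$-thinned sum over $M_3$, apply a Chernoff bound for bounded independent summands, and union-bound over nodes and over the failure events. The paper folds the two conditioning events into a single event $\mathcal{E}$ with $\Pr{\mathcal{E}}\ge 1-\tfrac{2}{n}$ and states the Chernoff exponent as $\tfrac{1/(2\e)}{1/(4\e\log n)}=2\log n$, whereas you use the form $\Pr{X\ge t}\le 2^{-t/M}$ directly; both are valid and give $o(n^{-2})$, so the constants line up and the argument is sound.
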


\begin{proof}
Let $\mathcal{E}$ be the event that $n \leq \tilde{n}$ and $\sum_{u \in M_2} \bar{w}(u, v) \leq \tau(n)$ for all $v \in M_2$. So far, we have shown that $\Pr{\mathcal{E}} \geq 1 - \frac{2}{n}$. We show that, conditioned on the event $\mathcal{E}$, $M_4$ is feasible with probability at least $1 - \frac{1}{n}$. For this purpose, we define reduced weights by setting $w'(u, v) = \min\{ \frac{1}{4 \e \log n}, \bar{w}(u, v) \}$ for each pair of nodes $u, v \in V$. Observe that it suffices to have feasibility in $M_4$ with respect to the reduced weights $w'$ because any pair of nodes connected by an edge of weight at least $\frac{1}{4 \e \log \tilde{n}} \leq \frac{1}{4 \e \log n}$ is deleted from $M_3$ to get $M_4$.

For each $v \in M_2$, we have $\Ex{ \sum_{u \in M_3} w'(u, v) \growingmid \mathcal{E}} \leq \Ex{ \sum_{u \in M_3} \bar{w}(u, v) \growingmid \mathcal{E}} \leq \tau(n) q \leq \frac{1}{2 \e}$. The random variable $\sum_{u \in M_3} w'(u, v)$ is a weighted sum of independent $0$/$1$ random variables in which all weights are at most $\frac{1}{4 \e \log n}$. Therefore, we can apply a Chernoff bound to get
\[
\Pr{ \sum_{u \in M_3} w'(u, v) \geq 1 \growingmid \mathcal{E} } \leq \Pr{ \sum_{u \in M_3} w'(u, v) \geq 2 \e \cdot \frac{1}{2 \e} \growingmid \mathcal{E} } \leq 2^{- \frac{1}{2 \e} \big/ \frac{1}{4 \e \log n}} = \frac{1}{n^2} \enspace.
\]
A union bound yields
\[
\Pr{\text{$M_4$ feasible} \growingmid \mathcal{E}} \geq 1 - \frac{1}{n} \enspace,
\]
and, hence, $\Pr{\text{$M_4$ feasible}} \geq 1 - \frac{3}{n}$.
\end{proof}

Finally, we are ready to show the result on the approximation guarantee of our algorithm.

\begin{theorem}\label{theorem:edgeweighted-apx}
If we pick $q \leq \frac{1}{(4 \e \log \tilde{n}) \tau(\tilde{n})}$, we have $\Ex{\lvert M_4 \rvert} \geq \frac{q}{4 \rho} \Ex{\OPT(w^S)}$, i.e., our algorithm is $O(\rho^2 \log^2 n)$-competitive.
\end{theorem}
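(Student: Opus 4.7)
The plan is to cascade bounds along the chain $\Ex{\OPT(w^S)} \to \Ex{\lvert M_1\rvert} \to \Ex{\lvert M_2\rvert} \to \Ex{\lvert M_3\rvert} \to \Ex{\lvert M_4\rvert}$, following the template of Theorem~\ref{theorem-onlinedisks} but paying an extra $O(\log^2 n)$ factor to absorb the asymmetric, edge-weighted conflicts.

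For the first link, $M_1$ is the output of the natural greedy on $G[V^S]$ with acceptance rule $\sum_{u \in M_1,\,u \prec v}\bar{w}(u,v) < 1$. A standard charging argument shows $\Ex{\lvert M_1\rvert} = \Omega(1/\rho)\,\Ex{\OPT(w^S)}$: every $v \in S^\star \setminus M_1$ in an optimal independent set $S^\star$ sends one unit of $\bar{w}$-mass backwards to its earlier $M_1$-blockers, and inductive independence caps the total mass absorbed by any fixed $u \in M_1$ at $\rho$. The second link $\Ex{\lvert M_2\rvert} \geq \tfrac{1}{c}\Ex{\lvert M_1\rvert}$ is the direct analog of Lemma~\ref{lemma-c-equations}: since the acceptance tests for $M_1$ and $M_2$ both look only at $M_1 \cap \{u \prec v\}$, the outcome of ``$v \in V^S$ vs.\ $V^I$'' can be deferred to the moment $v$ is processed, and stochastic similarity converts the inclusion probability at factor $c$. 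The third link is immediate: $\Ex{\lvert M_3\rvert} = q\,\Ex{\lvert M_2\rvert}$ by independent $q$-thinning.

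The main obstacle, and the source of the $\log^2 n$ loss, is the final thinning to $M_4$. A node $v \in M_3$ is dropped whenever some $u \in M_3$ lies in $H_v := \{u : \bar{w}(u,v) \geq 1/(4\e\log\tilde{n})\}$. Inductive independence alone gives no useful bound on $\lvert H_v \cap M_1\rvert$, because $H_v$ is defined via $\bar{w}$ in both directions and is not restricted to $\prec$-successors. This is where Lemma~\ref{lemma:6-probability} is essential: conditional on the high-probability event $\sum_{u \in M_2}\bar{w}(u,v) \leq \tau(\tilde{n})$ for all $v \in M_2$, a pigeonhole argument against the heaviness threshold yields $\lvert H_v \cap M_2\rvert \leq 4\e\,\tau(\tilde{n})\log\tilde{n}$. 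Because the $q$-coins are independent of everything else, conditioning on $v \in M_3$ is equivalent to conditioning on $v \in M_2$, so $\Ex{\lvert M_3 \cap H_v \setminus \{v\}\rvert \growingmid v \in M_3} \leq q \cdot 4\e\,\tau(\tilde{n})\log\tilde{n} + O(1/n)$, where the additive slack absorbs the failure event of Lemma~\ref{lemma:6-probability}. For $q \leq 1/((4\e\log\tilde{n})\tau(\tilde{n}))$ this bound is at most $\tfrac{1}{2}$, and Markov's inequality then gives $\Pr{v \in M_3 \setminus M_4 \growingmid v \in M_3} \leq \tfrac{1}{2}$, hence $\Ex{\lvert M_4\rvert} \geq \tfrac{1}{2}\Ex{\lvert M_3\rvert}$.

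Chaining the four links yields $\Ex{\lvert M_4\rvert} = \Omega(q/\rho)\,\Ex{\OPT(w^S)}$, which, after absorbing $c = O(1)$, matches the claimed $\Ex{\lvert M_4\rvert} \geq \tfrac{q}{4\rho}\Ex{\OPT(w^S)}$. Substituting the prescribed value $q = \Theta(1/(\tau(\tilde{n})\log\tilde{n})) = \Theta(1/(\rho\log^2 n))$ (using $\tau(\tilde{n}) = O(\rho\log n)$) and invoking Lemma~\ref{lemma:sampling2} to translate from $w^S$ to $w^I$ then gives a competitive ratio of $O(\rho^2\log^2 n)$, as desired.
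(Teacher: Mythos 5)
Your proposal follows essentially the same route as the paper: greedy gives $\Ex{\lvert M_1\rvert} \geq \frac{1}{\rho}\OPT(w^S)$, the deferred-decision argument gives $\Ex{\lvert M_2\rvert}\geq\frac1c\Ex{\lvert M_1\rvert}$, $q$-thinning gives $M_3$, and Lemma~\ref{lemma:6-probability} plus Markov on the number of heavy $M_3$-neighbours gives $M_4$. One step is imprecise, though: the claim that the failure event of Lemma~\ref{lemma:6-probability} contributes only an additive $O(1/n)$ to $\Ex{\lvert M_3\cap H_v\setminus\{v\}\rvert\growingmid v\in M_3}$ does not hold as stated — conditioning on $v\in M_3$ can inflate $\Pr{\neg\mathcal E\growingmid v\in M_3}$ by a factor of $1/\Pr{v\in M_3}$, and on $\neg\mathcal E$ the count $\lvert M_3\cap H_v\rvert$ can be of order $n$, so the contribution need not be $O(1/n)$. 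The clean fix is what the paper actually does: carry the conditioning on $\mathcal E$ all the way through to obtain $\Pr{v\in M_4\growingmid v\in M_3,\mathcal E}\geq\frac12$, and only at the very end lower-bound $\Ex{\lvert M_4\rvert}\geq\Pr{\mathcal E}\cdot\Ex{\lvert M_4\rvert\growingmid\mathcal E}$ using $\Pr{\mathcal E}\geq 1-\frac{3}{n}$. With that repair your argument coincides with the paper's and the claimed $O(\rho^2\log^2 n)$ bound follows.
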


\begin{proof}
Due to the greedy algorithm, we have $\lvert M_1 \rvert \geq \frac{1}{\rho} \OPT(w^S)$ for 0/1 weights $w^S$. Furthermore, for the same reasons as in the proof of Theorem~\ref{theorem-onlinedisks}, we have again $\Ex{ \lvert M_2 \rvert} \geq \frac{1}{c} \Ex{\lvert M_1 \rvert}$. 

Again, let $\mathcal{E}$ be the event that $n \leq \tilde{n}$ and $\sum_{u \in M_2} \bar{w}(u, v) \leq \tau(n)$ for all $v \in M_2$. For each node $v$ let $X_v = \left\lvert \left\{u \in M_3 \growingmid \bar{w}(u, v) \geq \frac{1}{4 \e \log n} \right\} \right\rvert$. Observe that $\Ex{ X_v \growingmid v \in M_3, \mathcal{E} } \leq \frac{1}{2}$ if $q \leq \frac{1}{(4 \e \log n) \tau(n)}$.

The event $v \in M_3$ but $v \not\in M_4$ only occurs when $X_v \geq 1$. By the above considerations, we have $\Pr{v \not\in M_4 \growingmid v \in M_3, \mathcal{E}} = \Pr{X_v \geq 1 \growingmid v \in M_3, \mathcal{E}} \leq \frac{1}{2}$. That is, $\Pr{ v \in M_4 \growingmid v \in M_3, \mathcal{E}} \geq \frac{1}{2}$, and therefore $\Ex{\lvert M_4 \rvert \growingmid M_2, \mathcal{E}} \geq \frac{q}{2} \lvert M_2 \rvert$.

In combination
\[
\Ex{\lvert M_4 \rvert} \; \geq \; \Pr{\mathcal{E}} \cdot \frac{q}{2} \cdot \Ex{\lvert M_2 \rvert \growingmid \mathcal{E}} \; \geq \; \Pr{\mathcal{E}} \cdot \frac{q}{2} \cdot \frac{1}{\rho} \cdot \OPT(w^S) \; \geq \; \frac{q}{4 \rho} \cdot \OPT(w^S) \enspace,
\]
proving the theorem.
\end{proof}

\subsection{Proof of Lemma~\ref{lemma:piggybankprobability}}
To show Lemma~\ref{lemma:piggybankprobability}, we note that, without loss of generality, we can assume that for all possible $0$/$1$ vectors $(x_1, \ldots, x_n) \in \{0, 1\}^n$, we have $\sum_{i=1}^n w_i(x_1, \ldots, x_{i-1}) = (\frac{3}{(1-r)^2 r} + 1) B$. Otherwise, we can increase the probability of the event to occur by modifying the respective functions. Under these circumstances, we can apply the following variant of the Chernoff bound.

\begin{lemma}
Let $X_1$, $X_2$, \ldots, $X_n$ independent Bernoulli trials such that $\Ex{X_i} \leq r$ for all $i \in [n]$. For each $i \in [n]$, let $w_i\colon \{0, 1\}^{i-1} \to [0, 1]$ be a function such that for all $(x_1, \ldots, x_n) \in \{0, 1\}^n$ we have $\sum_{i=1}^n w_i(x_1, \ldots, x_{i-1}) = \frac{\mu}{r}$. Then we have for all $\delta > 0$
\[
\Pr{\sum_{i = 1}^n w_i(X_1, \ldots, X_{i-1}) X_i \geq (1 + \delta) \mu} \leq \left( \frac{\e^\delta}{(1 + \delta)^{1+\delta}} \right)^\mu \enspace.
\]
\end{lemma}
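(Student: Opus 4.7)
The plan is to adapt the classical moment-generating-function proof of the Chernoff bound to the present setting, where the extra difficulty is that each weight $w_i$ is itself random, but measurable with respect to the history $X_1, \ldots, X_{i-1}$. Writing $S_i = \sum_{j \leq i} w_j(X_1, \ldots, X_{j-1}) X_j$, $S = S_n$, and $\mathcal{F}_i = \sigma(X_1, \ldots, X_i)$, Markov's inequality applied to $\e^{tS}$ gives, for any $t > 0$,
\[
\Pr{S \geq (1+\delta)\mu} \;\leq\; \e^{-t(1+\delta)\mu}\, \Ex{\e^{tS}} \enspace .
\]
The whole argument reduces to establishing the MGF bound $\Ex{\e^{tS}} \leq \exp(\mu(\e^t - 1))$, which is precisely the classical bound for a sum of independent Bernoullis of total mean $\mu$; plugging it in and setting $t = \ln(1+\delta)$ recovers $(\e^\delta/(1+\delta)^{1+\delta})^\mu$ in the standard way.

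To obtain the MGF bound, I would introduce $Y_i = r\, w_i\, (\e^t - 1)$ and show that the process $M_i = \exp\!\bigl(t S_i - \sum_{j \leq i} Y_j\bigr)$ is a supermartingale with respect to $(\mathcal{F}_i)$. The key conditional estimate uses that $w_i$ is $\mathcal{F}_{i-1}$-measurable (hence a fixed constant in $[0,1]$ given the history) while $X_i$ is independent of $\mathcal{F}_{i-1}$ with $\Ex{X_i} \leq r$. Convexity of $x \mapsto \e^{tx}$ on $[0,1]$ gives the secant bound $\e^{t w_i} \leq 1 + w_i(\e^t - 1)$, so
\[
\Ex{\e^{t w_i X_i} \growingmid \mathcal{F}_{i-1}} \;\leq\; 1 - r + r\,\e^{t w_i} \;\leq\; 1 + r w_i (\e^t - 1) \;\leq\; \e^{Y_i} \enspace ,
\]
which yields $\Ex{M_i \growingmid \mathcal{F}_{i-1}} \leq M_{i-1}$, and thus $\Ex{M_n} \leq M_0 = 1$ by iterated conditioning.

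The hypothesis $\sum_{i=1}^n w_i(x_1,\ldots,x_{i-1}) = \mu/r$ on every path then plays its essential role: it makes the exponent $\sum_{i=1}^n Y_i = \mu(\e^t - 1)$ a deterministic constant, so that $\Ex{M_n} = \e^{-\mu(\e^t - 1)} \Ex{\e^{tS}}$. Combined with $\Ex{M_n} \leq 1$, this gives $\Ex{\e^{tS}} \leq \exp(\mu(\e^t - 1))$, finishing the proof after the Markov step and optimization in $t$.

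I expect the main obstacle to be precisely the adaptedness of the weights: one cannot simply factor $\Ex{\e^{tS}}$ as a product of single-coordinate MGFs, since $w_i$ depends on $X_1,\ldots,X_{i-1}$. The supermartingale formulation sidesteps this, but only because the deterministic-total-weight assumption collapses $\sum_i Y_i$ to a constant; without that normalization, one would only bound $\Ex{\e^{tS}}$ by $\Ex{\exp(r(\e^t - 1)\sum_i w_i)}$, and controlling that quantity would require further information about the distribution of $\sum_i w_i$ that is not available in the lemma's hypothesis.
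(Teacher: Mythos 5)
Your proposal is correct and follows essentially the same route as the paper: Markov's inequality applied to $\e^{tS}$, a convexity (secant) bound to handle a single Bernoulli with $[0,1]$-valued weight, and crucial use of the deterministic-total-weight hypothesis to make the exponential bound collapse to $\exp(\mu(\e^t-1))$. The only difference is presentational — you package the bound on $\Ex{\e^{tS}}$ as a supermartingale inequality, whereas the paper obtains the same bound by explicit induction on the number of variables, conditioning on $X_1$; the two are interchangeable here.
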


\begin{proof}
For all $t > 0$, we have
\begin{align*}
& \Pr{\sum_{i = 1}^n w_i(X_1, \ldots, X_{i-1}) X_i \geq (1 + \delta) \mu} \\
& = \Pr{\exp\left( t \sum_{i = 1}^n w_i(X_1, \ldots, X_{i-1}) X_i \right) \geq \e^{t(1 + \delta) \mu}} \\
& = \e^{-t(1 + \delta) \mu} \cdot \Ex{\exp\left( t \sum_{i = 1}^n w_i(X_1, \ldots, X_{i-1}) X_i \right)}
\end{align*}
For $t = \ln(1+\delta)$, we show that $\Ex{\exp\left( t \sum_{i = 1}^n w_i(X_1, \ldots, X_{i-1}) X_i \right)} \leq \exp(\delta \mu)$ by induction.

\begin{align*}
& \Ex{\exp\left( t \sum_{i = 1}^n w_i(X_1, \ldots, X_{i-1}) X_i \right)} \\
& = \Pr{X_1 = 1} \Ex{\exp\left( t w_1 + \sum_{i = 2}^n w_i(1, X_2, \ldots, X_{i-1}) X_i \right)} \\
& \qquad + \Pr{X_1 = 0} \Ex{\exp\left( \sum_{i = 2}^n w_i(0, X_2, \ldots, X_{i-1}) X_i \right)} \\
& = \Pr{X_1 = 1} \e^{ t w_1 } \Ex{\exp\left( \sum_{i = 2}^n w_i(1, X_2, \ldots, X_{i-1}) X_i \right)} \\
& \qquad + \Pr{X_1 = 0} \Ex{\exp\left( \sum_{i = 2}^n w_i(0, X_2, \ldots, X_{i-1}) X_i \right)} \enspace.
\end{align*}
By induction hypothesis $\Ex{\exp\left( \sum_{i = 2}^n w_i(1, X_2, \ldots, X_{i-1}) X_i \right)} \leq \exp\left( \delta(\mu - w_1 r) \right)$ and also \linebreak $\Ex{\exp\left( \sum_{i = 2}^n w_i(0, X_2, \ldots, X_{i-1}) X_i \right)} \leq \exp\left( \delta(\mu - w_1 r) \right)$. Plugging this in, we get
\[
\Ex{\exp\left( t \sum_{i = 1}^n w_i(X_1, \ldots, X_{i-1}) X_i \right)} \leq \left( \Pr{X_1 = 1} \e^{ t w_1 } + \Pr{X_1 = 0} \right)\exp\left( \delta(\mu - w_1 r) \right) \enspace.
\]
As $\Pr{X_1 = 1} \leq 1$ and $r \e^{ t w_1 } + (1 - r) \leq \exp(r w_1 \delta)$, we get $\Ex{\exp\left( t \sum_{i = 1}^n w_i(X_1, \ldots, X_{i-1}) X_i \right)} \leq \e^{\delta \mu}$.
\end{proof}

To show the lemma, we set $\mu = r (\frac{3}{(1-r)^2 r} + 1) B$, $\delta = 1-r$. With these definitions, we have
\[
(1 + \delta) \mu \leq \left( \frac{3}{(1-r)^2 r} \right) B
\]
and therefore
\begin{align*}
\Pr{\sum_{i=1}^n w_i(X_1, \ldots, X_{i-1}) X_i \geq \left( \frac{3}{(1-r)^2 r} \right) B} & \leq \Pr{\sum_{i=1}^n w_i(X_1, \ldots, X_{i-1}) X_i \geq (1 - \delta) \mu} \\
& \leq \exp\left( -\frac{\delta^2 \mu}{3} \right) \\
& = \exp\left( - \frac{(1-r)^2 r (\frac{3}{(1-r)^2 r} + 1) B}{3} \right) \\
& \leq \exp(- B) \enspace.
\end{align*}
\label{sec:edgeweighted}
\section{Conclusion}
In this paper, we present a novel approach to evaluate online algorithms for packing problems. We have concentrated on the maximum independent set problem in graphs of bounded inductive independence because a large number of practically relevant problems can be captured this way. Besides, in the offline setting pretty simple algorithms already achieve sensible approximation guarantees. In the worst-case online setting, however, even the unweighted case one can only achieve trivial guarantees. 

Our approach covers and generalizes a variety of stochastic input models, including recently popular ones derived from secretary problems and prophet inequalities. For our most general graph sampling model, we design online algorithms that allow to obtain near-optimal competitive ratios. It is likely that such a unifying model can also be applied in other domains. For example, a number of existing algorithms for the secretary model \cite{Babaioff2007,Korula2009,Chakraborty2012} can be naturally generalized in a similar way.

Stochastic input models have recently found prominent applications in the context of online auctions, where the goal is to sell items to bidders in order to maximize social welfare or seller revenue. In this case, bidders arrive online one by one, and their benefits for the items are private information that have to be elicited in a truthful way. Instead, we take a more general and fundamental approach that is not tailored to specific auction scenarios. Nevertheless, some of our results can be applied in the context of auctions as well. For example, our algorithm in Section~\ref{sec:weighted} can be turned into a truthful online auction, when its weight is the private value of node for being in the independent set. It is possible to combine our allocation algorithm with appropriate payments (see~\cite{Hoefer2013Universal}) such that revealing the weight truthfully becomes a dominant strategy for each node $v$.

There are many interesting open problems for further research stemming from our paper. Some immediate problems concern, e.g., the simplification of our algorithms or optimization of ratios and parameters. For example, our algorithms, especially the ones for edge-weighted conflict graphs, apply a thinning step to construct $M_3$, which drastically reduces the probability that any conflict occurs. It would be interesting to see if such a thinning step can be omitted by using a more clever analysis of the resulting graph structure. In addition, our algorithms are designed to tackle the general case of the graph sampling model, thereby sacrificing constants in the competitive ratio. It might be possible to design algorithms with better ratios for more constrained input models (like secretary or prophet-inequality models). Finally and more fundamentally, a central requirement in our analysis is that distributions for different nodes are independent. In fact, this assumption is also critical in existing approaches 
to ordinary secretary and prophet-inequality settings, and it would be interesting to see how this assumption can be relaxed. 

\bibliography{online}
\bibliographystyle{plain}

\end{document}